\newcommand{\overbar}[1]{\mkern 1.5mu\overline{\mkern-1.5mu#1\mkern-1.5mu}\mkern 1.5mu}
\newtheorem{theorem}{Theorem}
\newtheorem{lemma}{Lemma}
\newtheorem{definition}{Definition}
\newcommand{\Id}{\textmd{Id}}
\newcommand{\inte}{\textrm{Int}}
\newcommand{\rclf}{\textsf{v}}
\newcommand{\rcbf}{\textsf{b}}
\newcommand{\rclbf}{\textsf{w}}
\newcommand{\cclf}{\textrm{c}}
\newcommand{\dly}{\text{d}}
\begin{document}

\title{\LARGE \bf Razumikhin-type Control Lyapunov and Barrier Functions for Time-Delay Systems
\thanks{This work was supported by the European Research Council through the European Union's Horizon 2020 Research and Innovation Programme under Grant 864017--L2C, the Walloon Region and the Innoviris Foundation.}
}

\author{Wei~Ren
\thanks{W. Ren is with ICTEAM Institute, Universit\'{e} catholique de Louvain, 1348 Louvain-la-Neuve, Belgium. Email: \texttt{w.ren@uclouvain.be}.}
}

\maketitle

\begin{abstract}
This paper studies the stabilization and safety problems of nonlinear time-delay systems, where time delays exist in system state and affect the controller design. Following the Razumikhin approach, we propose a novel control Lyapunov-Razumikhin function to facilitate the controller design and to achieve the stabilization objective. To ensure the safety objective, we propose a Razumikhin-type control barrier function for time-delay systems for the first time. Furthermore, the proposed Razumikhin-type control Lyapunov and barrier functions are merged such that the stabilization and safety control design can be combined to address the stabilization and safety simultaneously, which further extends the control design from the delay-free case into the time-delay case. Finally, the proposed approach is illustrated via a numerical example.
\end{abstract}

\section{Introduction}
\label{sec-intro}

With emerging interests in cyber-physical systems \cite{Humayed2017cyber} like networked control systems, robotics and autonomous vehicles, stabilization and safety are two fundamental objectives, which require dynamic systems to achieve the stability (or tracking/synchronization) objective on the one hand and to satisfy safety constraints on the other hand. In particular, the safety objective is placed in the priority position. For safety-critical systems, it is imperative to keep them into the safe set while controlling them. Consequently, the design of the stabilizing controller must comply with state (or input) constraints to ensure the safety. Similar to control Lyapunov functions (CLFs) proposed for the stabilization objective \cite{Sontag1989universal, Jankovic2001control, Pepe2017control}, safety constraints can be specified in terms of a set invariance and verified via control barrier functions (CBFs) \cite{Ames2016control, Romdlony2016stabilization}. CLFs and CBFs have been applied to deal with different objectives for diverse dynamical systems \cite{Ogren2001control, Panagou2015distributed, Lindemann2018control}, and further combined to study the stabilization and safety objectives simultaneously, and the combination can be either implicit \cite{Ngo2005integrator} or explicit \cite{Ames2016control, Jankovic2018robust, Romdlony2016stabilization} via different techniques.

In the fields of engineering, biology and physics, time delays are frequently encountered due to information acquisition and computation for control decisions and executions \cite{Sipahi2011stability}, and may induce many undesired phenomena like oscillation, instability and performance deterioration \cite{Gao2019stability}. For this purpose, both stability and stabilization have been studied extensively in the past decades \cite{Fridman2014tutorial}. Since time delays cause a violation of monotonic decrease conditions of classic Lyapunov functions, two ways to extend the classic Lyapunov-based method are \cite{Ren2019krasovskii}: (i) the Krasovskii approach based on Lyapunov-Krasovskii functionals which are positive definite and whose derivatives are negative definite along system solutions; (ii) the Razumikhin approach based on Lyapunov-Razumikhin functions which are positive definite and whose derivatives are negative definite under the Razumikhin condition \cite{Teel1998connections}. These two approaches have been applied successfully in stability analysis and controller design for time-delay systems \cite{Sipahi2011stability, Ren2018vector, Gao2019stability}. On the other hand, in the aforementioned areas involving time delays, numerous dynamic systems are safety-critical, which results in the need of the safety objective for time-delay systems \cite{Prajna2005methods, Jankovic2018control}. However, most existing results are focused on stability, stabilization and robustness instead of on safety, which motivates us to investigate how to guarantee the safety objective of time-delay systems.

In this letter, we focus on nonlinear systems with state delays and follow the Razumikhin approach to investigate the stabilization and safety problems. First, based the notion of steepest descent feedback \cite{Pepe2014stabilization}, we propose a novel control Lyapunov-Razumikhin function (CLRF) to overcome the verification of the Razumikhin condition and to facilitate the controller design for time-delay systems. With the proposed CLRF, the classic small control property (SCP) is extended to the time-delay case. Therefore, based on the Razumikhin-type CLF and SCP, the closed-from controller is designed to guarantee the stabilization objective. Second, following the similar mechanism and based on the unsafe set (e.g., obstacles and forbidden states) \cite{Prajna2005methods}, the control barrier-Razumikhin function (CBRF) is proposed for time-delay systems for the first time, and further the safety controller is derived explicitly in the closed form. Finally, to achieve the stabilization and safety objectives simultaneously, the proposed CLRF and CBRF are merged to combine both stabilization and safety control design, which results in a novel control design method via the development of the Razumikhin-type control Lyapunov-barrier function (CLBF). In particular, we show how to construct the Razumikhin-type CLBF via the proposed CLRF and CBRF. In conclusion, our main contributions are two-fold: (i) by proposing CLRF and CBRF, both stabilizing and safety controllers are derived explicitly, which extends the Sontag's formula \cite{Sontag1989universal} and the existing results \cite{Ames2016control, Jankovic2018robust} to the time-delay case; (ii) the proposed CLRF and CBRF are merged together such that the stabilizing control and the safety control can be combined in the sense that the stabilization and safety objectives are ensured simultaneously for time-delay systems.

The remainder of this paper is as follows. Preliminaries are presented in Section \ref{sec-nonconsys}. All Razumikhin-type control functions are proposed in Section \ref{sec-Razumikhintype}. Simulation is presented in Section \ref{sec-examples} followed by conclusions and future research in Section \ref{sec-conclusion}. All proofs are located in the Appendix.

\section{Preliminaries}
\label{sec-nonconsys}

Let $\mathbb{R}:=(-\infty, +\infty); \mathbb{R}^{+}:=[0, +\infty); \mathbb{N}:=\{0, 1, \ldots\}$ and $\mathbb{N}^{+}:=\{1, 2, \ldots\}$. $\|x\|$ denotes the Euclidian norm of the vector $x\in\mathbb{R}^{n}$, and $(a,b):=(a^{\top}, b^{\top})^{\top}$ for $a, b\in\mathbb{R}^{n}$. Given a set $\mathbb{C}\subset\mathbb{R}^{n}$, $\partial\mathbb{C}$ is the boundary of $\mathbb{C}$; $\inte(\mathbb{C})$ is the interior of $\mathbb{C}$; $\overbar{\mathbb{C}}$ is the closure of $\mathbb{C}$. Given $\delta>0$ and $\mathbf{x}\in\mathbb{R}^{n}$, an open ball centered at $\mathbf{x}$ with radius $\delta$ is denoted by $\mathbf{B}(\mathbf{x}, \delta):=\{x\in\mathbb{R}^{n}: \|x-\mathbf{x}\|<\delta\}$; $\mathbf{B}(\delta):=\mathbf{B}(0, \delta)$. $\mathcal{C}([a, b], \mathbb{R}^{n})$ denotes the class of piecewise continuous functions mapping $[a, b]$ to $\mathbb{R}^{n}$; $\mathcal{C}(\mathbb{R}^{n}, \mathbb{R}^{p})$ denotes the class of continuously differentiable functions mapping $\mathbb{R}^{n}$ to $\mathbb{R}^{p}$. $\Id$ is the identity function, and $\alpha\circ\beta(v):=\alpha(\beta(v))$ for any $\alpha, \beta\in\mathcal{C}(\mathbb{R}^{n}, \mathbb{R}^{n})$. A function $\alpha: \mathbb{R}^{+}\rightarrow\mathbb{R}^{+}$ is of class $\mathcal{K}$ if it is continuous, $\alpha(0)=0$, and strictly increasing; it is of class $\mathcal{K}_{\infty}$ if it is of class $\mathcal{K}$ and unbounded. A function $\beta: \mathbb{R}^{+}\times\mathbb{R}^{+}\rightarrow\mathbb{R}^{+}$ is of class $\mathcal{KL}$ if $\beta(s, t)$ is of class $\mathcal{K}$ for each fixed $t\geq0$ and $\beta(s, t)\rightarrow0$ as $t\rightarrow0$ for each fixed $s\geq0$. A function $V: \mathbb{R}^{n}\rightarrow\mathbb{R}$ is called \textit{proper} if the sublevel set $\{x\in\mathbb{R}^{n}: V(x)\leq c\}$ is compact for all $c\in\mathbb{R}$, or equivalently, $V$ is radially unbounded.

\subsection{Time-Delay Control Systems}
\label{subsec-nonsystem}

In this letter, we consider nonlinear time-delay control systems with the following dynamics:
\begin{align}
\label{eqn-1}
\begin{aligned}
\dot{x}(t)&=f(x_{t})+g(x_{t})u, &\quad& t>0, \\
x(t)&=\xi(t), &\quad& t\in[-\Delta, 0],
\end{aligned}
\end{align}
where $x\in\mathbb{R}^{n}$ is the system state, $x_{t}=x(t+\theta)\in\mathbb{R}^{n}$ is the time-delay state with $\theta\in[-\Delta, 0]$, and $\Delta>0$ is the upper bound of time delays. The initial state is $\xi\in\mathcal{C}([-\Delta, 0], \mathbb{X}_{0})$ with $\mathbb{X}_{0}\subset\mathbb{R}^{n}$ and $\|\xi\|_{\dly}:=\sup_{\theta\in[-\Delta, 0]}\|\xi(\theta)\|$ being bounded. The control input $u$ takes value from the set $\mathbb{U}\subset\mathbb{R}^{m}$, and the input function is not specified explicitly since it may depend on the current state or/and the time-delay trajectory. Assume that the functionals $f: \mathcal{C}([-\Delta, 0], \mathbb{R}^{n})\rightarrow\mathbb{R}^{n}$ and $g: \mathcal{C}([-\Delta, 0], \mathbb{R}^{n})\rightarrow\mathbb{R}^{n\times m}$ are continuous and locally Lipschitz, which guarantees the existence of the unique solution to the system \eqref{eqn-1}; see \cite[Section 2]{Hale1993introduction}. Also, let $f(0)=0$ and $g(0)=0$, that is, $x(t)\equiv0$ for all $t>0$ is a trivial solution of the system \eqref{eqn-1}. To consider the stabilization problem of the system \eqref{eqn-1}, we assume that the origin is included in the initial set $\mathbb{X}_{0}$.

\begin{definition}[\cite{Sastry2013nonlinear}]
\label{def-1}
Given the control input $u\in\mathbb{U}$, the system \eqref{eqn-1} is \textit{globally asymptotically stable (GAS)} if there exists $\beta\in\mathcal{KL}$ such that $\|x(t)\|\leq\beta(\|\xi\|_{\dly}, t)$ for all $t\geq0$ and all bounded $\xi\in\mathcal{C}([-\Delta, 0], \mathbb{R}^{n})$; and the system \eqref{eqn-1} is \textit{semi-globally asymptotically stable (semi-GAS)} if there exists $\beta\in\mathcal{KL}$ such that $\|x(t)\|\leq\beta(\|\xi\|_{\dly}, t)$ for all $t\geq0$ and all $\xi\in\mathcal{C}([-\Delta, 0], \mathbb{X}_{0})$.
\end{definition}

From Definition \ref{def-1}, the \textit{stabilization control} is to design a feedback controller such that the closed-loop system is GAS. To study the system safety, some notations are defined below. For the system \eqref{eqn-1}, its unsafe set is denoted as an open set $\mathbb{D}\subset\mathbb{R}^{n}$. The system \eqref{eqn-1} is \textit{safe}, if $x(t)\notin\overbar{\mathbb{D}}$ for all $t\geq-\Delta$. Hence, $\mathbb{X}_{0}\cap\mathbb{D}=\varnothing$ is assumed such that $\xi(\theta)\notin\overbar{\mathbb{D}}$ for all $\theta\in[-\Delta, 0]$. The \textit{safety control} is to design a feedback controller to ensure the safety of the closed-loop system.

Since the safety and stabilization objectives of time-delay systems cannot be achieved via classic CLFs and CBFs, our goal is to implement the Razumikhin approach to propose novel types of CLFs and CBFs for the system \eqref{eqn-1}.

\section{Main Results}
\label{sec-Razumikhintype}

In this section, we follow the Razumikhin approach to propose control Lyapunov and barrier functions for time-delay systems. To this end, we first propose a novel control Lyapunov-Razumikhin function for the stabilization objective, then a novel control barrier-Razumikhin function for the safety objective, and finally combine the proposed Razumikhin-type control functions to study the stabilization and safety objectives simultaneously.

\subsection{Control Lyapunov-Razumikhin Functions}
\label{subsec-clrf}

We start with recalling the following control Lyapunov-Razumikhin function from \cite{Jankovic2001control, Pepe2017control}.

\begin{definition}
\label{def-2}
For the system \eqref{eqn-1}, a function $V_{\cclf}\in\mathcal{C}(\mathbb{R}^{n}, \mathbb{R}^{+})$ is called a \textit{control Lyapunov-Razumikhin function (CLRF-I)}, if
\begin{enumerate}[(i)]
  \item there exist $\alpha_{1}, \alpha_{2}\in\mathcal{K}_{\infty}$ such that $\alpha_{1}(\|x\|)\leq V_{\cclf}(x)\leq\alpha_{2}(\|x\|)$ for all $x\in\mathbb{R}^{n}$;

  \item there exist $\gamma_{\cclf}, \rho_{\cclf}\in\mathcal{K}$ with $\rho_{\cclf}(v)>v$ for all $v>0$ such that for all $\phi\in\mathcal{C}([-\Delta, 0], \mathbb{R}^{n})$ with $\phi(0)=x$, if $\rho_{\cclf}(V_{\cclf}(x))\geq\|V_{\cclf}(\phi)\|_{\dly}$, then $\inf_{u\in\mathbb{U}}\{L_{f}V_{\cclf}(\phi)+L_{g}V_{\cclf}(\phi)u\}\leq-\gamma_{\cclf}(V_{\cclf}(x))$,
\end{enumerate}
where $\|V_{\cclf}(\phi)\|_{\dly}:=\sup_{\theta\in[-\Delta, 0]}V_{\cclf}(\phi(\theta))$, $L_{f}V_{\cclf}(\phi):=\frac{\partial V_{\cclf}(x)}{\partial x}f(\phi)$ and $L_{g}V_{\cclf}(\phi):=\frac{\partial V_{\cclf}(x)}{\partial x}g(\phi)$.
\end{definition}

In Definition \ref{def-2}, the effects of time delays are shown via the Razumikhin condition: $\rho_{\cclf}(V_{\cclf}(x))\geq\|V_{\cclf}(\phi)\|_{\dly}$, which can be written equivalently as $\rho_{\cclf}(V_{\cclf}(x))\geq V_{\cclf}(\phi(\theta))$ for all $\theta\in[-\Delta, 0]$; see \cite{Pepe2017control, Ren2018vector}. In particular, if $L_{g}V_{\cclf}(\phi)\equiv0$, then Definition \ref{def-2} is reduced to the one in \cite{Jankovic2001control}.

In the delay-free case \cite{Sontag1989universal}, the existence of the continuous controller is verified via the small control property (SCP). However, due to the Razumikhin condition in the time-delay case, the violation of the Razumikhin condition results in additional difficulties in the controller design, and thus the SCP is not available here. On the other hand, the existing construction of the stabilizing controller is based on the optimization theory \cite{Sepulchre2012constructive} and the trajectory-based approach \cite{Jankovic2001control}, and the closed form of the continuous controller cannot be expressed easily and explicitly. In the following, to avoid the verification of the Razumikhin condition and to establish the continuous controller explicitly, we propose an alternative CLRF, which is based on the steepest descent feedback controller \cite{Clarke2010discontinuous, Pepe2017control}.

\begin{definition}
\label{def-3}
For the system \eqref{eqn-1}, a function $V\in\mathcal{C}(\mathbb{R}^{n}, \mathbb{R}^{+})$ is called a \textit{control Lyapunov-Razumikhin function (CLRF-II)}, if item (i) in Definition \ref{def-2} holds, and there exist $\gamma_{\rclf}, \eta_{\rclf}, \mu_{\rclf}\in\mathbb{R}^{+}$ such that $\gamma_{\rclf}>\eta_{\rclf}$, and for any nonzero $\phi\in\mathcal{C}([-\Delta, 0], \mathbb{R}^{n})$ with $\phi(0)=x$,
\begin{align}
\label{eqn-2}
&\inf_{u\in\mathbb{U}}\{L_{f}V(\phi)+L_{g}V(\phi)u\}<-\gamma_{\rclf}V(x)+\eta_{\rclf}\|e^{\mu_{\rclf}\theta}V(\phi)\|_{\dly},
\end{align}
where $\|e^{\mu_{\rclf}\theta}V(\phi)\|_{\dly}:=\sup_{\theta\in[-\Delta, 0]}e^{\mu_{\rclf}\theta}V(\phi(\theta))$.
\end{definition}

Different from Definition \ref{def-2} based on the Razumikhin condition, the Razumikhin condition is not needed in \eqref{eqn-2}, which will be applied to facilitate the controller design afterwards. The introduction of $\mu_{\rclf}\in\mathbb{R}^{+}$ is to increase the flexibility of \eqref{eqn-2}, and can be set as 0 simply. The following theorem shows that the CLRF-II is a CLRF-I under some reasonable conditions.

\begin{lemma}
\label{thm-1}
Consider the system \eqref{eqn-1} with a CLRF-II $V\in\mathcal{C}(\mathbb{R}^{n}, \mathbb{R}^{+})$, if there exists $\rho\in\mathcal{K}$ with $\rho(v(0))\geq\|v\|$ for all $v\in\mathcal{C}([-\Delta, 0], \mathbb{R}^{n})$ such that $\gamma_{\rclf}-\eta_{\rclf}\circ\rho\in\mathcal{K}$, then $V$ is a CLRF-I with $\gamma_{\cclf}=\gamma_{\rclf}-\eta_{\rclf}\circ\rho$ and $\rho_{\cclf}=\rho$.
\end{lemma}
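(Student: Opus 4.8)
The plan is a direct verification that $V$, together with $\rho_{\cclf}:=\rho$ and $\gamma_{\cclf}:=\gamma_{\rclf}-\eta_{\rclf}\circ\rho$ (read as $s\mapsto\gamma_{\rclf}s-\eta_{\rclf}\rho(s)$, since $\gamma_{\rclf},\eta_{\rclf}$ are scalars), satisfies Definition~\ref{def-2}. Item~(i) requires no work: it is already part of being a CLRF-II and concerns $V$ alone. The whole content is item~(ii), a conditional that need be checked only in the Razumikhin regime $\rho_{\cclf}(V(x))\geq\|V(\phi)\|_{\dly}$; the idea is simply to use that regime to bound the delay-dependent right-hand side of \eqref{eqn-2} by a pure function of $V(x)$.

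So I would fix $\phi\in\mathcal{C}([-\Delta,0],\mathbb{R}^{n})$ with $\phi(0)=x$ and assume $\rho(V(x))\geq\|V(\phi)\|_{\dly}$. If $\phi\equiv0$ the desired inequality is immediate, since then $V(0)=0=\gamma_{\cclf}(0)$ and $f(0)=g(0)=0$; if $\phi\neq0$ but $\phi(0)=0$ the hypothesis is vacuous, because it forces $\|V(\phi)\|_{\dly}\leq\rho(0)=0$ and hence $\phi\equiv0$ via item~(i); so it remains to treat $\phi\neq0$ with $V(x)>0$, where \eqref{eqn-2} applies. The one estimate needed is that, since $\theta\leq0$ and $\mu_{\rclf}\geq0$ give $e^{\mu_{\rclf}\theta}\leq1$,
\[
\|e^{\mu_{\rclf}\theta}V(\phi)\|_{\dly}=\sup_{\theta\in[-\Delta,0]}e^{\mu_{\rclf}\theta}V(\phi(\theta))\leq\sup_{\theta\in[-\Delta,0]}V(\phi(\theta))=\|V(\phi)\|_{\dly}\leq\rho(V(x)),
\]
the last step being the assumed Razumikhin condition. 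Inserting this into \eqref{eqn-2} gives $\inf_{u\in\mathbb{U}}\{L_{f}V(\phi)+L_{g}V(\phi)u\}<-\gamma_{\rclf}V(x)+\eta_{\rclf}\rho(V(x))=-\gamma_{\cclf}(V(x))$, which in particular yields the non-strict bound demanded by Definition~\ref{def-2}(ii). Since $\gamma_{\rclf}-\eta_{\rclf}\circ\rho\in\mathcal{K}$ by hypothesis, $\gamma_{\cclf}$ is an admissible gain and $\rho_{\cclf}=\rho$ plays the role of the Razumikhin gain, so $V$ is a CLRF-I.

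I do not expect a genuine obstacle; the argument is essentially one chain of inequalities. The points that call for a little care are controlling the $\mu_{\rclf}$-weighted supremum (dispatched by $e^{\mu_{\rclf}\theta}\leq1$ on $[-\Delta,0]$), the degenerate configurations $\phi\equiv0$ and $x=0$, and reconciling the strict inequality of \eqref{eqn-2} with the non-strict requirement of Definition~\ref{def-2}(ii) together with its $\rho_{\cclf}(v)>v$ clause, which is read off from the assumed property of $\rho$ (in particular on constant functions, giving $\rho(s)\geq s$). None of these is substantive.
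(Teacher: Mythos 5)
The paper does not actually supply a proof of this lemma --- it simply remarks that the result ``is similar to Remark 4 in [Pepe2017]'' and defers --- so there is no in-paper argument to match against; the relevant question is only whether your verification stands on its own. It does, and it is the argument a reader would reconstruct: item (i) of Definition~\ref{def-2} is free, and for item (ii) one works inside the Razumikhin regime $\rho(V(x))\geq\|V(\phi)\|_{\dly}$, uses $e^{\mu_{\rclf}\theta}\leq1$ on $[-\Delta,0]$ to bound $\|e^{\mu_{\rclf}\theta}V(\phi)\|_{\dly}\leq\|V(\phi)\|_{\dly}\leq\rho(V(x))$, and substitutes into \eqref{eqn-2} to get $\inf_{u}\{L_{f}V(\phi)+L_{g}V(\phi)u\}<-\gamma_{\rclf}V(x)+\eta_{\rclf}\rho(V(x))=-\gamma_{\cclf}(V(x))$, reading $\gamma_{\rclf}-\eta_{\rclf}\circ\rho$ as $s\mapsto\gamma_{\rclf}s-\eta_{\rclf}\rho(s)$. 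Your handling of the degenerate $\phi(0)=0$ cases (vacuity, or $\phi\equiv0$) is also correct.

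One point deserves a slightly sharper statement than you give it: Definition~\ref{def-2} requires $\rho_{\cclf}(v)>v$ \emph{strictly} for $v>0$, whereas testing the hypothesis $\rho(v(0))\geq\|v\|$ on constant functions yields only $\rho(s)\geq s$, which you yourself note. As written this leaves a hairline gap; to close it you can observe that testing on a nonconstant $v$ with $v(0)=s$ and $\sup_\theta\|v(\theta)\|>s$ forces the strict inequality $\rho(s)>s$ (indeed the stated universal condition on $v$ is so strong that, taken literally over all of $\mathcal{C}([-\Delta,0],\mathbb{R}^{n})$, it is arguably vacuous --- this is a wrinkle in the lemma's phrasing rather than in your proof, but it is worth flagging rather than folding into ``none of these is substantive'').
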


Lemma \ref{thm-1} is similar to Remark 4 in \cite{Pepe2017control}, and the proof is omitted here. We emphasize that the converse is not necessarily valid. The reason lies in that when the Razumikhin condition is not satisfied, the evolution of the CLRF-I is unknown from Definition \ref{def-2}, whereas the evolution of the CLRF-II is bounded via \eqref{eqn-2}. With the CLRF-II, the SCP is extended to the time-delay case, which is presented below.

\begin{definition}
\label{def-4}
Consider the system \eqref{eqn-1} admitting the CLRF-II $V\in\mathcal{C}(\mathbb{R}^{n}, \mathbb{R}^{+})$, the system \eqref{eqn-1} is said to satisfy the \textit{Razumikhin-type small control property (R-SCP)}, if for arbitrary $\varepsilon>0$, there exists $\delta>0$ such that for any nonzero $\phi\in\mathcal{C}([-\Delta, 0], \mathbf{B}(\delta))$ with $\phi(0)=x$, there exists $u\in\mathbf{B}(\varepsilon)$ such that $L_{f}V(\phi)+L_{g}V(\phi)u<-\gamma_{\rclf}V(x)+\eta_{\rclf}\|e^{\mu_{\rclf}\theta}V(\phi)\|_{\dly}$.
\end{definition}

Following the Razumikhin approach, Definition \ref{def-4} extends the SCP into the time-delay case. Note that the R-SCP is satisfied for all nonzero $\phi\in\mathcal{C}([-\Delta, 0], \mathbf{B}(\delta))$ due to time delays. With the CLRF-II and the R-SCP, the closed-form controller is derived explicitly in the next theorem such that the stabilization objective is achieved for time-delay systems. This theorem extends the Sontag's formula in \cite{Sontag1989universal} to the time-delay case, and the proof is presented in Appendix \ref{asubsec-pf2}.

\begin{theorem}
\label{thm-2}
If the time-delay system \eqref{eqn-1} admits a CLRF-II $V\in\mathcal{C}(\mathbb{R}^{n}, \mathbb{R}^{+})$ and satisfies the R-SCP, then the controller $u(\phi):=\kappa(\lambda, \mathfrak{a}_{\rclf}(\phi), (L_{g}V(\phi))^{\top})$ defined as
\begin{equation}
\label{eqn-3}
\kappa(\lambda, p, q):=\left\{\begin{aligned}&\frac{p+\sqrt{p^{2}+\lambda\|q\|^{4}}}{-\|q\|^{2}}q, &\text{ if }& q\neq 0, \\
&0, &\text{ if }&  q=0, \end{aligned}\right.
\end{equation}
with  $\lambda>0$ and $\mathfrak{a}_{\rclf}(\phi):=L_{f}V(\phi)+\gamma_{\rclf}V(x)-\eta_{\rclf}\|e^{\mu_{\rclf}\theta}V(\phi)\|_{\dly}$, is continuous at the origin and ensures the GAS of the closed-loop system.
\end{theorem}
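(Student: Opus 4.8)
The plan is to establish the two assertions of the theorem separately: that the feedback $u(\cdot)$ is continuous at the origin, i.e., $u(\phi)\to 0=u(0)$ as $\|\phi\|_{\dly}\to 0$, and that it renders the closed loop GAS in the sense of Definition~\ref{def-1}; well-posedness of the closed-loop functional differential equation will be checked along the way.

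For the stability claim I would substitute $u=u(\phi)$ from \eqref{eqn-3} into $L_{f}V(\phi)+L_{g}V(\phi)u$. Writing $p:=\mathfrak{a}_{\rclf}(\phi)$ and $q:=(L_{g}V(\phi))^{\top}$, the formula for $\kappa$ gives $L_{g}V(\phi)u(\phi)=-\big(p+\sqrt{p^{2}+\lambda\|q\|^{4}}\,\big)$ whenever $q\neq 0$, and since $p=L_{f}V(\phi)+\gamma_{\rclf}V(x)-\eta_{\rclf}\|e^{\mu_{\rclf}\theta}V(\phi)\|_{\dly}$ this collapses to
\begin{align*}
L_{f}V(\phi)+L_{g}V(\phi)u(\phi)
&=-\gamma_{\rclf}V(x)+\eta_{\rclf}\|e^{\mu_{\rclf}\theta}V(\phi)\|_{\dly}-\sqrt{p^{2}+\lambda\|q\|^{4}} \\
&\le-\gamma_{\rclf}V(x)+\eta_{\rclf}\|e^{\mu_{\rclf}\theta}V(\phi)\|_{\dly};
\end{align*}
when $q=0$ we have $u(\phi)=0$ and this bound is exactly \eqref{eqn-2}. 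Evaluated along a closed-loop solution, and using $e^{\mu_{\rclf}\theta}\le 1$ for $\theta\in[-\Delta,0]$ (as $\mu_{\rclf}\ge 0$), this gives $\tfrac{d}{dt}V(x(t))\le-\gamma_{\rclf}V(x(t))+\eta_{\rclf}\sup_{\theta\in[-\Delta,0]}V(x(t+\theta))$ for all $t\ge 0$ (the degenerate case $x_{t}\equiv 0$ being trivial). Since $\gamma_{\rclf}>\eta_{\rclf}\ge 0$, choosing $c\in(1,\gamma_{\rclf}/\eta_{\rclf})$ and $\rho(s)=cs$ turns this into the Razumikhin inequality $\tfrac{d}{dt}V(x(t))\le-(\gamma_{\rclf}-c\eta_{\rclf})V(x(t))$ whenever $\|V(x_{t})\|_{\dly}\le\rho(V(x(t)))$; a Razumikhin/Halanay-type argument then yields constants $k\ge 1$ and $\sigma>0$ with $V(x(t))\le ke^{-\sigma t}\|V(\xi)\|_{\dly}$ for all $t\ge 0$. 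Combining with item~(i) of Definition~\ref{def-2} produces $\|x(t)\|\le\alpha_{1}^{-1}\!\big(ke^{-\sigma t}\alpha_{2}(\|\xi\|_{\dly})\big)=:\beta(\|\xi\|_{\dly},t)$ with $\beta\in\mathcal{KL}$, that is, GAS; the same a priori bound on $V(x(t))$, hence on $\|x(t)\|$, also rules out finite escape, so closed-loop solutions exist on $[0,\infty)$.

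For continuity at the origin I would use the R-SCP. Since $L_{f}V(0)=0$, $V(0)=0$ and $g(0)=0$, one has $u(0)=\kappa(\lambda,0,0)=0$. Fix $\varepsilon>0$ and take $\delta>0$ as in Definition~\ref{def-4}: for every nonzero $\phi\in\mathcal{C}([-\Delta,0],\mathbf{B}(\delta))$ there is $\tilde u\in\mathbf{B}(\varepsilon)$ with $\mathfrak{a}_{\rclf}(\phi)+L_{g}V(\phi)\tilde u<0$. If $q=0$ then $u(\phi)=0$; if $q\neq 0$ then $p<\|q\|\varepsilon$, and inserting this into $\|u(\phi)\|=\big(p+\sqrt{p^{2}+\lambda\|q\|^{4}}\,\big)/\|q\|$ and separating $p\le 0$ from $0<p<\|q\|\varepsilon$ yields $\|u(\phi)\|\le\sqrt{\lambda}\,\|q\|$ and $\|u(\phi)\|<\varepsilon+\sqrt{\varepsilon^{2}+\lambda\|q\|^{2}}$, respectively. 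Since $L_{g}V(\phi)=(\partial V/\partial x)(\phi(0))\,g(\phi)\to 0$ as $\|\phi\|_{\dly}\to 0$ (by differentiability of $V$, continuity of $g$, and $g(0)=0$), shrinking $\delta$ further forces $\|q\|$ so small that $\|u(\phi)\|<3\varepsilon$; rescaling $\varepsilon$ then gives continuity at the origin. The same estimates show that $u(\cdot)$ is continuous as a functional away from the origin as well: $\kappa$ is a composition of continuous maps on $\{q\neq 0\}$, while at $\phi$ with $q=0$ and $\phi(0)\neq 0$ the strict inequality in \eqref{eqn-2} forces $p<0$, at which points $\kappa(\lambda,\cdot,\cdot)$ extends continuously by $0$; with the standing regularity of $f$ and $g$, the closed loop therefore admits solutions that continue in the sense of \cite{Hale1993introduction}.

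I expect the main obstacle to be the continuity claim at the origin, coupled with well-posedness. Unlike the delay-free Sontag construction, the feedback here is a functional of the whole history segment $\phi$, so the small-control bound must be propagated through the Razumikhin-weighted term $\eta_{\rclf}\|e^{\mu_{\rclf}\theta}V(\phi)\|_{\dly}$ inside $\mathfrak{a}_{\rclf}$ rather than through a pointwise bound on $L_{f}V$, and one must verify simultaneously that the feedback is regular enough---both off and at the origin---for the closed loop to admit and continue solutions. By contrast, the decay estimate is the algebraic cancellation above followed by a standard Razumikhin/Halanay argument.
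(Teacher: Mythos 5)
Your proof is correct and follows essentially the same strategy as the paper: algebraic cancellation in the Lie derivative via the Sontag-type formula, an exponential-decay estimate for $V$ along closed-loop solutions, and the R-SCP for continuity at the origin. Two differences are worth recording.

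For the decay estimate, you first drop the exponential weight via $e^{\mu_{\rclf}\theta}\leq 1$ and then invoke a standard Razumikhin test with linear gain $\rho(s)=cs$, $c\in(1,\gamma_{\rclf}/\eta_{\rclf})$. The paper instead keeps the weight and appeals to its Lemma~\ref{lem-A1}, a Halanay-type inequality tailored to the weighted supremum; both routes give $V(x(t))\leq k\,e^{-\sigma t}\|V(\xi)\|_{\dly}$ and hence, through $\alpha_{1},\alpha_{2}$, the desired $\mathcal{KL}$ estimate. Your simplification costs a slightly worse decay rate but is otherwise equivalent; you implicitly assume $\eta_{\rclf}>0$ when forming $\gamma_{\rclf}/\eta_{\rclf}$, so the degenerate case $\eta_{\rclf}=0$ should be dispatched with a one-line remark (it reduces to the delay-free Sontag estimate), a case the paper's Lemma~\ref{lem-A1} also excludes by requiring $\gamma>\eta>0$.

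For continuity at the origin your argument is in fact tighter than the paper's. You retain the coupling $\mathfrak{a}_{\rclf}(\phi)<\varepsilon\|L_{g}V(\phi)\|$ delivered by the R-SCP and divide through by $\|q\|$, splitting the sign of $p$; this yields the clean bounds $\|u(\phi)\|\leq\sqrt{\lambda}\,\|q\|$ and $\|u(\phi)\|<\varepsilon+\sqrt{\varepsilon^{2}+\lambda\|q\|^{2}}$, after which shrinking $\delta$ so that $\|q\|=\|L_{g}V(\phi)\|$ is small closes the argument. The paper instead replaces $p$ by the uniform bound $\bar{\varepsilon}^{2}=\varepsilon^{2}+\eta_{\rclf}\alpha_{2}(\delta_{\min})$ and then writes $\|u\|\leq(\bar{\varepsilon}^{2}+\sqrt{\bar{\varepsilon}^{4}+\lambda\varepsilon^{4}})/\varepsilon$, which tacitly uses $\|q\|\geq\varepsilon$ in the denominator even though the preceding lines only establish $\|q\|\leq\varepsilon$; your version avoids this slip by never decoupling the numerator bound from the denominator. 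Your additional remarks on continuity off the origin and on forward completeness (via the a priori $V$ bound) are not in the paper's proof but are harmless and arguably make the argument more complete.
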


\subsection{Razumikhin-type Control Barrier Function}
\label{subsec-cbrf}

A commonly-used approach to investigate the safety specification is based on CBFs, which are generally defined via the safe set \cite{Ames2016control}. However, for physical systems like robotic systems, the workspace and unsafe set are known \textit{a priori}. Hence, an intuitive way to define the CBF is based on the unsafe set \cite{Wieland2007constructive}. To be specific, given the unsafe set $\mathbb{D}\subset\mathbb{R}^{n}$, a function $B\in\mathcal{C}(\mathbb{R}^{n}, \mathbb{R})$ is associated such that
\begin{align}
\label{eqn-4}
\mathbb{D}&\subseteq\{x\in\mathbb{R}^{n}: B(x)>0\}.
\end{align}
Moreover, with the function $B\in\mathcal{C}(\mathbb{R}^{n}, \mathbb{R})$, the Razumikhin-type CBF is proposed below for time-delay systems.

\begin{definition}
\label{def-5}
Consider the system \eqref{eqn-1} with the unsafe set $\mathbb{D}\subset\mathbb{R}^{n}$, a function $B\in\mathcal{C}(\mathbb{R}^{n}, \mathbb{R})$ is called a \textit{control barrier-Razumikhin function (CBRF)}, if there exist $\mu_{\rcbf}\geq0$ and $\gamma_{\rcbf}>\eta_{\rcbf}\geq0$ such that $\mathbb{S}_{\rcbf}:=\{x\in\mathbb{R}^{n}: B(x)\leq0\}\neq\varnothing$, and for any nonzero $\phi\in\mathcal{C}([-\Delta, 0], \mathbb{R}^{n}\setminus\mathbb{D})$ with $\phi(0)=x$,
\begin{align}
\label{eqn-5}
&\inf_{u\in\mathbb{U}}\{L_{f}B(\phi)+L_{g}B(\phi)u\}<-\gamma_{\rcbf}B(x)+\eta_{\rcbf}\|e^{\mu_{\rcbf}\theta}B(\phi)\|_{\dly}.
\end{align}
\end{definition}

In Definition \ref{def-5}, the non-emptiness of the set $\mathbb{S}_{\rcbf}$ is to guarantee the non-emptiness of the safe set and further the safety of the initial state, and the condition \eqref{eqn-5} is motivated from \eqref{eqn-2}. From the similarity between \eqref{eqn-5} and \eqref{eqn-2}, the following theorem presents the closed-form safety controller.

\begin{theorem}
\label{thm-3}
Given the unsafe set $\mathbb{D}\subset\mathbb{R}^{n}$, if the system \eqref{eqn-1} admits a CBRF $B\in\mathcal{C}(\mathbb{R}^{n}, \mathbb{R})$ and satisfies the R-SCP, then the controller $u(\phi):=\kappa(\lambda, \mathfrak{a}_{\rcbf}(\phi), (L_{g}B(\phi))^{\top})$ with the function $\kappa$ defined in \eqref{eqn-3} and $\mathfrak{a}_{\rcbf}(\phi):=L_{f}B(\phi)+\gamma_{\rcbf}B(x)-\eta_{\rcbf}\|e^{\mu_{\rcbf}\theta}B(\phi)\|_{\dly}$, is continuous at the origin and ensures the safety of the closed-loop system with the initial condition $\xi\in\mathcal{C}([-\Delta, 0], \mathbb{S}_{\rcbf})$.
\end{theorem}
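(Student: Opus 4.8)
The plan is to mirror the proof of Theorem~\ref{thm-2}, with $(B,\gamma_{\rcbf},\eta_{\rcbf},\mu_{\rcbf})$ replacing $(V,\gamma_{\rclf},\eta_{\rclf},\mu_{\rclf})$, and then to replace the Lyapunov-decrease step by a Razumikhin-type invariance argument showing that the sublevel set $\mathbb{S}_{\rcbf}=\{x:B(x)\le0\}$ is forward invariant for the closed loop.

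First I would prove that $u(\phi)=\kappa(\lambda,\mathfrak{a}_{\rcbf}(\phi),(L_gB(\phi))^{\top})$ is continuous. Since $B\in\mathcal{C}(\mathbb{R}^n,\mathbb{R})$ and $f,g$ are continuous, on the open set $\{L_gB(\phi)\ne0\}$ the controller is a composition of continuous maps. At a nonzero $\phi$ with $L_gB(\phi)=0$, inequality \eqref{eqn-5} forces $\mathfrak{a}_{\rcbf}(\phi)<0$, and the standard continuity estimate for Sontag's formula ($\kappa(\lambda,p,q)\to0$ as $q\to0$ whenever $\limsup p<0$) applies. At the origin $g(0)=0$ gives $L_gB(0)=0$ and $\kappa(\lambda,\cdot,0)=0$; continuity there follows from the R-SCP (with $B,\gamma_{\rcbf},\eta_{\rcbf},\mu_{\rcbf}$ in place of $V,\gamma_{\rclf},\eta_{\rclf},\mu_{\rclf}$) exactly as in Appendix~\ref{asubsec-pf2}, because for $\varepsilon>0$ the R-SCP supplies $\delta>0$ so that for $\phi\in\mathcal{C}([-\Delta,0],\mathbf{B}(\delta))$ one has $\mathfrak{a}_{\rcbf}(\phi)<\|L_gB(\phi)\|\varepsilon$, and the usual bound on $\kappa$ then makes $\|u(\phi)\|$ of order $\varepsilon$.

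Next I would substitute the controller into \eqref{eqn-1}. Using $L_gB(\phi)\kappa(\lambda,\mathfrak{a}_{\rcbf}(\phi),(L_gB(\phi))^{\top})=-\mathfrak{a}_{\rcbf}(\phi)-\sqrt{\mathfrak{a}_{\rcbf}(\phi)^2+\lambda\|L_gB(\phi)\|^4}$ when $L_gB(\phi)\ne0$, and $u(\phi)=0$ otherwise, I obtain, for every $\phi$ with $\phi(0)=x$ and $\phi\in\mathcal{C}([-\Delta,0],\mathbb{R}^n\setminus\mathbb{D})$,
\begin{align*}
L_fB(\phi)+L_gB(\phi)u(\phi)<-\gamma_{\rcbf}B(x)+\eta_{\rcbf}\|e^{\mu_{\rcbf}\theta}B(\phi)\|_{\dly},
\end{align*}
with strictness coming from $\lambda>0$ in the first case and from \eqref{eqn-5} in the second. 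Hence along any closed-loop solution $x(\cdot)$ and at every time $t$ whose history segment $x_t$ takes values in $\mathbb{R}^n\setminus\mathbb{D}$,
\begin{align*}
\dot B(x(t))<-\gamma_{\rcbf}B(x(t))+\eta_{\rcbf}\sup_{\theta\in[-\Delta,0]}e^{\mu_{\rcbf}\theta}B(x(t+\theta)).
\end{align*}

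Finally I would run the invariance argument. For $\xi\in\mathcal{C}([-\Delta,0],\mathbb{S}_{\rcbf})$ we have $B(\xi(\theta))\le0$, hence $\xi(\theta)\notin\mathbb{D}$ since $\mathbb{D}\subseteq\{B>0\}$. Arguing by contradiction, suppose $B(x(t))>0$ for some $t>0$ and set $t^{*}:=\inf\{t>0:B(x(t))>0\}$; continuity gives $B(x(t^{*}))=0$ and $B(x(s))\le0$ for all $s\in[0,t^{*}]$, so $B(x(t^{*}+\theta))\le0$ for all $\theta\in[-\Delta,0]$ and every segment $x_s$ with $s\le t^{*}$ avoids $\mathbb{D}$; the displayed inequality therefore holds up to $t^{*}$ and yields $\dot B(x(t^{*}))<-\gamma_{\rcbf}\cdot0+\eta_{\rcbf}\sup_{\theta}e^{\mu_{\rcbf}\theta}B(x(t^{*}+\theta))\le0$ since the supremum of nonpositive numbers is nonpositive, so $B(x(\cdot))$ strictly decreases at $t^{*}$, contradicting $B(x(s))\le0$ on $[0,t^{*})$; the endpoint case $t^{*}=0$ is excluded identically using the right Dini derivative at $0$ and $B(\xi(0))\le0$. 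Thus $B(x(t))\le0$ for all $t\ge0$, and a zero of $B(x(\cdot))$ at some $t>0$ would again give $\dot B(x(t))<0$ and hence $B(x(s))>0$ for $s<t$, so in fact $B(x(t))<0$ for all $t>0$. Since $\overbar{\mathbb{D}}\subseteq\{x:B(x)\ge0\}$ by continuity of $B$, this gives $x(t)\notin\overbar{\mathbb{D}}$ for $t>0$, while $x(\theta)=\xi(\theta)\notin\overbar{\mathbb{D}}$ for $\theta\in[-\Delta,0]$ by the standing assumption, so the closed loop is safe. I expect the invariance step to be the main obstacle: one must invoke \eqref{eqn-5} at $t^{*}$, which requires the \emph{entire} history $x_{t^{*}}$ to lie outside $\mathbb{D}$, and argue this without circularity, while also handling the boundary $B=0$ and the endpoint $t^{*}=0$ via the strictness of the inequality.
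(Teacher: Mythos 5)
Your proof is correct, and while it shares the paper's skeleton for the controller construction and the resulting differential inequality, the invariance step takes a genuinely different route. The paper proves safety by invoking Lemma~\ref{lem-A1} to obtain the exponential bound $B(x(t))\le e^{-\varrho_{\rcbf}t}B(x(0))$ and then reads off $B(x(t))<0$, whereas you bypass Lemma~\ref{lem-A1} entirely and run a direct first-exit-time contradiction: define $t^{*}=\inf\{t>0:B(x(t))>0\}$, show that the history $x_{t^{*}}$ still avoids $\mathbb{D}$ so that \eqref{eqn-5} applies, and deduce a strict decrease at $t^{*}$. Your argument buys two things the paper leaves implicit. First, Lemma~\ref{lem-A1} is stated only for $V:\mathbb{R}^{n}\to\mathbb{R}^{+}$, i.e.\ nonnegative functions, and applying it verbatim to the sign-indefinite $B$ (which is precisely $\le0$ on $\mathbb{S}_{\rcbf}$) involves an extension that the paper does not spell out; your direct argument sidesteps this. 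Second, the Razumikhin-type inequality \eqref{eqn-5} is available only while the history lies in $\mathbb{R}^{n}\setminus\mathbb{D}$, so feeding the resulting differential inequality into Lemma~\ref{lem-A1} ``for all $t$'' has a circularity that the paper glosses over; you explicitly verify that all segments up to $t^{*}$ avoid $\mathbb{D}$, removing the circularity. The trade-off is that the paper's route, once Lemma~\ref{lem-A1} is accepted, also yields an exponential contraction rate toward $\mathbb{S}_{\rcbf}$, which your barrier-only argument does not provide (and does not need for the stated theorem). Two minor points worth tidying: you should explicitly note that the strictness in \eqref{eqn-5} is only guaranteed for nonzero $\phi$, and the case $\phi\equiv0$ is handled trivially by $u=0$ and $B(0)\le0$; and when you conclude $\dot B(x(t^{*}))<0$ from $B(x(t^{*}))=0$ and $\sup_{\theta}e^{\mu_{\rcbf}\theta}B(x(t^{*}+\theta))\le0$, it is cleaner to phrase the contradiction as: $\dot B(x(t^{*}))<0$ forces $B(x(t))<0$ on a right neighborhood of $t^{*}$, which is incompatible with $t^{*}$ being the infimum of $\{t:B(x(t))>0\}$.
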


The proof of Theorem \ref{thm-3} is presented in Appendix \ref{asubsec-pf3}. Note that from \eqref{eqn-4}, $B(x)>0$ may hold for some $x\in\mathbb{R}^{n}\setminus\mathbb{D}$. In this case, if the boundaries of $\mathbb{D}$ and $\mathbb{S}_{\rcbf}$ do not intersect such that $\mathbb{S}_{\rcbf}$ is entered first, that is, $\overline{\mathbb{R}^{n}\setminus(\mathbb{D}\cup\mathbb{S}_{\rcbf})}\cap\overbar{\mathbb{D}}=\varnothing$,  then the initial condition is allowed to be in $\mathcal{C}([-\Delta, 0], \mathbb{R}^{n}\setminus\mathbb{D})$. In particular, $\overline{\mathbb{R}^{n}\setminus(\mathbb{D}\cup\mathbb{S}_{\rcbf})}\cap\overbar{\mathbb{D}}=\varnothing$ implies $\partial(\mathbb{D}\cup\mathbb{S}_{\rcbf})\cap\partial\mathbb{D}=\varnothing$, and thus $(\mathbb{D}+\mathbf{B}(\varepsilon))\cap\partial\mathbb{D}\subset\mathbb{S}_{\rcbf}$ for arbitrarily small $\varepsilon>0$.

\subsection{Razumikhin-type Control Lyapunov-Barrier Function}
\label{subsec-combineA}

To incorporate the stabilization and safety objectives simultaneously, the optimization techniques have been extensively applied \cite{Jankovic2018robust, Ames2016control} to merge the CLF and CBF. However, it not easy to solve time-delay optimization problems to derive closed-form analytical solutions \cite{Wu2019new}. To avoid this issue and motivated by the existing results \cite{Romdlony2016stabilization}, a novel Razumikhin-type control function is proposed below to study the stabilization and safety objectives simultaneously.

\begin{definition}
\label{def-6}
Consider the system \eqref{eqn-1} with the unsafe set $\mathbb{D}\subset\mathbb{R}^{n}$, a proper and lower-bounded function $W\in\mathcal{C}(\mathbb{R}^{n}, \mathbb{R})$ is called a \textit{control Lyapunov-barrier-Razumikhin function (CLBRF)}, if
\begin{enumerate}[(i)]
  \item $W: \mathbb{R}^{n}\rightarrow\mathbb{R}$ is positive on the set $\mathbb{D}$;
  \item for any nonzero $\phi\in\mathcal{C}([-\Delta, 0], \mathbb{R}^{n}\setminus\mathbb{D})$, there exist $\gamma_{\rclbf}>\eta_{\rclbf}\geq0$ and $\mu_{\rclbf}\geq0$ such that $\inf_{u\in\mathbb{U}}\{L_{f}W(\phi)+L_{g}W(\phi)u\}<-\gamma_{\rclbf}W(x)+\eta_{\rclbf}\|e^{\mu_{\rclbf}\theta}W(\phi)\|_{\dly}$;
  \item $\mathbb{S}_{\rclbf}=\{x\in\mathbb{R}^{n}: W(x)\leq0\}\neq\varnothing$;
  \item $\overline{\mathbb{R}^{n}\setminus(\mathbb{D}\cup\mathbb{S}_{\rclbf})}\cap\overbar{\mathbb{D}}=\varnothing$.
\end{enumerate}
\end{definition}

With the proposed CLBRF and R-SCP, the next theorem shows the controller design to ensure the safety and semi-GAS of the system \eqref{eqn-1} simultaneously, and the proof is given in Appendix \ref{asubsec-pf4}.

\begin{theorem}
\label{thm-4}
Given the unsafe set $\mathbb{D}\subset\mathbb{R}^{n}$, if the system \eqref{eqn-1} admits a CLBRF $W\in\mathcal{C}(\mathbb{R}^{n}, \mathbb{R})$ and satisfies the R-SCP, then the controller $u(\phi)=\kappa(\lambda, \mathfrak{a}_{\rclbf}(\phi), (L_{g}W(\phi))^{\top})$ with the function $\kappa$ defined in \eqref{eqn-3}, $\lambda>0$ and $\mathfrak{a}_{\rclbf}(\phi):=L_{f}W(\phi)+\gamma_{\rclbf}W(x)-\eta_{\rclbf}\|e^{\mu_{\rclbf}\theta}W(\phi)\|_{\dly}$, is continuous at the origin and ensures both semi-GAS and safety of the closed-loop system with the initial condition $\xi\in\mathcal{C}([-\Delta, 0], \mathbb{R}^{n}\setminus\mathbb{D})$.
\end{theorem}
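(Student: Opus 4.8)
The plan is to run, in parallel, the two mechanisms behind Theorems~\ref{thm-2} and \ref{thm-3}: the barrier part of the CLBRF $W$ will give safety and the Lyapunov (decrease-with-lower-bound) part will give semi-GAS, both driven by the single algebraic property of the Sontag-type map $\kappa$. First I substitute the proposed controller $u(\phi)=\kappa(\lambda,\mathfrak{a}_{\rclbf}(\phi),(L_{g}W(\phi))^{\top})$ into $\dot W$. Where $L_{g}W(\phi)\neq0$, the defining identity of $\kappa$ yields $L_{g}W(\phi)\,u(\phi)=-\bigl(\mathfrak{a}_{\rclbf}(\phi)+\sqrt{\mathfrak{a}_{\rclbf}(\phi)^{2}+\lambda\|L_{g}W(\phi)\|^{4}}\,\bigr)$, and after cancelling the common term $L_{f}W(\phi)$ with the one hidden in $\mathfrak{a}_{\rclbf}$ one gets
\begin{equation*}
\dot W(x(t))=-\gamma_{\rclbf}W(x(t))+\eta_{\rclbf}\|e^{\mu_{\rclbf}\theta}W(x_{t})\|_{\dly}-\sqrt{\mathfrak{a}_{\rclbf}(x_{t})^{2}+\lambda\|L_{g}W(x_{t})\|^{4}},
\end{equation*}
whose square-root term is strictly positive because $\lambda>0$; where $L_{g}W(\phi)=0$, item~(ii) of Definition~\ref{def-6} with $u=0$ gives the same strict bound. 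Hence, along the closed loop, for every $t$ at which $x_{t}$ is nonzero and valued in $\mathbb{R}^{n}\setminus\mathbb{D}$,
\begin{equation}\label{eqn-aux}
\dot W(x(t))<-\gamma_{\rclbf}W(x(t))+\eta_{\rclbf}\sup_{\theta\in[-\Delta,0]}e^{\mu_{\rclbf}\theta}W(x(t+\theta));
\end{equation}
continuity of $u$ at the origin follows verbatim from the R-SCP as in the proof of Theorem~\ref{thm-2}.

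For safety, set $c_{0}:=\max\{0,\sup_{\theta\in[-\Delta,0]}e^{\mu_{\rclbf}\theta}W(\xi(\theta))\}$ and argue by Razumikhin on \eqref{eqn-aux}: at a first time $t^{\star}$ where $W(x(\cdot))$ would exceed $c_{0}$ one has $\sup_{\theta}e^{\mu_{\rclbf}\theta}W(x(t^{\star}+\theta))\leq c_{0}$, so \eqref{eqn-aux} forces $\dot W(x(t^{\star}))<(\eta_{\rclbf}-\gamma_{\rclbf})c_{0}\leq0$, a contradiction (the case $x_{t^{\star}}=0$ is impossible here since $f(0)=g(0)=0$ would freeze the solution). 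Thus $W(x(t))\leq c_{0}$ for all $t\geq-\Delta$. If $\xi$ is valued in $\mathbb{S}_{\rclbf}$ then $c_{0}=0$, hence $x(t)\in\mathbb{S}_{\rclbf}$, and item~(i) of Definition~\ref{def-6} gives $\mathbb{S}_{\rclbf}\cap\mathbb{D}=\varnothing$, whence $x(t)\notin\mathbb{D}$. For a general initial function valued in $\mathbb{R}^{n}\setminus\mathbb{D}$, item~(iv) supplies the geometry used in the remark after Theorem~\ref{thm-3}: $\overline{\mathbb{R}^{n}\setminus(\mathbb{D}\cup\mathbb{S}_{\rclbf})}$ and $\overline{\mathbb{D}}$ are disjoint, so $W\equiv0$ on $\partial\mathbb{D}$ and a one-sided neighbourhood of $\partial\mathbb{D}$ lying outside $\mathbb{D}$ is contained in $\mathbb{S}_{\rclbf}$; by continuity the solution must enter $\mathbb{S}_{\rclbf}$ and remain in it throughout its approach to $\partial\mathbb{D}$ before it could touch $\overline{\mathbb{D}}$, so at a hypothetical first contact with $\partial\mathbb{D}$ the relevant $\Delta$-history lies in $\mathbb{S}_{\rclbf}$ and \eqref{eqn-aux} again gives $\dot W(x(t))<0$, contradicting $W$ increasing through the level $0$. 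Hence $x(t)\notin\overline{\mathbb{D}}$ for all $t\geq-\Delta$, i.e.\ the closed loop is safe.

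For semi-GAS, once safety is known \eqref{eqn-aux} holds for all $t$ with $x_{t}\neq0$ and $\gamma_{\rclbf}>\eta_{\rclbf}$. The function $v(t):=\max\{W(x(t)),0\}$ is Lipschitz and, examining the three cases $W(x(t))>0$, $<0$, $=0$, inherits from \eqref{eqn-aux} the Halanay-type inequality $D^{+}v(t)\leq-\gamma_{\rclbf}v(t)+\eta_{\rclbf}\sup_{\theta\in[-\Delta,0]}e^{\mu_{\rclbf}\theta}v(t+\theta)$; since $\gamma_{\rclbf}>\eta_{\rclbf}$ this yields $v(t)\leq v(0)\,e^{-\sigma t}$ for some $\sigma>0$ with $\sigma=\gamma_{\rclbf}-\eta_{\rclbf}e^{\sigma\Delta}$. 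Therefore $W(x(t))$ is bounded above, $x(t)$ stays in a sublevel set of $W$ which is compact because $W$ is proper, and $W(x(t))$ converges to the region $\mathbb{S}_{\rclbf}$; feeding this back into \eqref{eqn-aux}, together with the $\mathcal{K}_{\infty}$ sandwich that the CLBRF produced in Section~\ref{subsec-combineA} from a CLRF-II obeys near the origin (after subtracting its minimum value $W_{\star}=\min W$, attained since $W$ is lower bounded), upgrades the decay of $v$ to a genuine estimate $\|x(t)\|\leq\beta(\|\xi\|_{\dly},t)$ with $\beta\in\mathcal{KL}$, valid for all $\xi\in\mathcal{C}([-\Delta,0],\mathbb{X}_{0})$.

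The main obstacle is the coupling of the sign-indefiniteness of $W$ with the Razumikhin supremum term. In the safety step this is what makes the general initial condition delicate: because $\sup_{\theta}e^{\mu_{\rclbf}\theta}W(x(t+\theta))$ can be positive while $W(x(t))\leq0$, the Razumikhin blocking of $\partial\mathbb{D}$ works only if the whole $\Delta$-window preceding a would-be contact already lies in $\mathbb{S}_{\rclbf}$, which has to be extracted from the separation condition~(iv) (together with a local bound on $\|\dot x\|$, available since $f,g$ are locally Lipschitz and $u$ is continuous). In the semi-GAS step the same phenomenon means that the shifted function $W-W_{\star}$ does not satisfy a clean Halanay inequality (an additive nonnegative offset $(\gamma_{\rclbf}-\eta_{\rclbf})|W_{\star}|$ appears), so convergence all the way to the origin --- as opposed to convergence to $\mathbb{S}_{\rclbf}$ --- must rely on the extra near-origin structure inherited from the CLRF-II used to build $W$; making that step rigorous is the crux of the proof.
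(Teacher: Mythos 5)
Your derivation of the strict decrease inequality for $W$ under the Sontag-type controller, and the continuity-at-origin argument via the R-SCP, coincide with the paper's proof. For the $\xi\in\mathcal{C}([-\Delta,0],\mathbb{S}_{\rclbf})$ case your $\max\{W,0\}$ device is actually technically cleaner than what the paper does: the paper invokes Lemma~\ref{lem-A1} directly on $W$, even though that lemma is stated for $\mathbb{R}^{+}$-valued $V$ and the contradiction step $D^{+}\mathfrak{U}(t^{\ast})>(-\gamma+\eta e^{\varrho\Delta})e^{-\varrho t^{\ast}}V(0)$ implicitly uses $V(0)\geq 0$. Your truncated $v(t)=\max\{W(x(t)),0\}$ inherits a genuine Halanay inequality by case-checking and avoids that subtlety.

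The genuine gap is exactly the one you flag. Your Halanay argument delivers $v(t)\to 0$, i.e.\ the trajectory is attracted to the sublevel set $\mathbb{S}_{\rclbf}=\{W\leq 0\}$, but the theorem claims semi-GAS, that is, attraction to the origin. You write that the upgrade from $\mathbb{S}_{\rclbf}$ to $\{0\}$ ``must rely on the extra near-origin structure inherited from the CLRF-II used to build $W$; making that step rigorous is the crux of the proof.'' This is an honest admission that the proposal does not close. The paper's route for this step is different: from the decrease inequality and Lemma~\ref{lem-A1} it obtains $W(x(t))\leq e^{-\varrho_{\rclbf}t}W(\xi(0))$, observes that since $W$ is proper and lower-bounded the trajectory has compact closure, and then runs an $\omega$-limit--set argument: $W(x(t))$ is monotone with limit $\mathbf{w}$, so $W$ is constant on $\Omega(x)$, from which the paper concludes $\Omega(x)$ is a singleton and that, together with $\mathbb{D}\cap\Omega(x)=\varnothing$, the singleton is the origin. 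That LaSalle-type invariance argument is the missing piece your sketch needs; note however that the paper gives it in a terse form (the inference from ``$W$ is constant on $\Omega(x)$'' to ``$\Omega(x)$ is a singleton'' and from there to ``the singleton is the origin'' relies on properties of the constructed $W$ that are not made explicit in Definition~\ref{def-6}), so if you adopt it you should supply the detail that the only positively invariant subset of a level set of $W$ in $\mathbb{R}^{n}\setminus\mathbb{D}$ on which the decrease inequality saturates is $\{0\}$.

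A secondary gap, again one you partly acknowledge, is in the safety argument for a general $\xi\in\mathcal{C}([-\Delta,0],\mathbb{R}^{n}\setminus\mathbb{D})$: to invoke the Razumikhin blocking at a would-be first contact with $\partial\mathbb{D}$ you need the entire $\Delta$-history at that time to lie in $\mathbb{S}_{\rclbf}$, and you only gesture at why the geometry in item~(iv) of Definition~\ref{def-6} guarantees this. The paper's treatment is comparably informal (``the trajectory will reach the boundary\dots instead of enter\dots directly''), so your instinct that this is a delicate point is correct; it deserves an explicit argument that the trajectory first crosses into $\{W\leq 0\}$, that the sublevel sets $\{W\leq c\}$ with $c\leq W(\xi(0))$ are then rendered forward invariant after a one-sided delay window, and that, crucially, this happens before any possible contact with $\overbar{\mathbb{D}}$.
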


Theorem \ref{thm-4} implies the simultaneous satisfaction of both safety and stabilization objectives. Next, we show how to construct the CLBRF via the proposed CLRF-II and CBRF.

\begin{theorem}
\label{thm-5}
Given the unsafe set $\mathbb{D}\subset\mathbb{R}^{n}$, and assume that the system \eqref{eqn-1} admits a CLRF-II $V\in\mathcal{C}(\mathbb{R}^{n}, \mathbb{R}^{+})$ and a CBRF $B\in\mathcal{C}(\mathbb{R}^{n}, \mathbb{R})$, if
\begin{enumerate}[(i)]
  \item there exist $\mathbb{X}\subset\mathbb{R}^{n}\setminus\{0\}$ and a continuous function $\varphi: \mathbb{R}^{+}\rightarrow\mathbb{R}^{+}$ such that $\mathbb{D}\subset\mathbb{X}$ and $B(x)\leq-\varphi(\|x\|)$ for all $x\notin\mathbb{X}$;
  \item there exists $\psi\in\mathbb{R}^{+}$ such that $\alpha_{2}(\|x\|)<\psi\varphi(\|x\|)$ for all $x\in\partial\mathbb{X}$, where $\alpha_{2}$ is in Definition \ref{def-2};
  \item $\min\{\gamma_{\rclf}, \gamma_{\rcbf}\}>\max\{\eta_{\rclf}, \eta_{\rcbf}\}$,
\end{enumerate}
then $W(x)=V(x)+\psi B(x)$ is a CLBRF for the system \eqref{eqn-1} with the initial condition satisfying $\xi\in\mathcal{C}([-\Delta, 0], \mathbb{R}^{n}\setminus\mathbf{D})$ with $\mathbf{D}:=\{x\in\mathbb{X}: W(x)>0\}$.
\end{theorem}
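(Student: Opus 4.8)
The plan is to verify the four defining properties of a CLBRF in Definition~\ref{def-6} for $W(x)=V(x)+\psi B(x)$, one at a time, with the bulk of the work going into properties (i) and (iv). First, properness and lower-boundedness of $W$: $V$ is proper (by item (i) of Definition~\ref{def-2}, $V$ is radially unbounded and nonnegative) and $B$ is continuous; the hypothesis $B(x)\le-\varphi(\|x\|)$ for $x\notin\mathbb{X}$ together with assumption~(ii), $\alpha_2(\|x\|)<\psi\varphi(\|x\|)$ on $\partial\mathbb{X}$, is exactly what is needed to dominate the possibly-negative contribution $\psi B(x)$ by $V(x)$ outside a bounded set, so that $W(x)\ge V(x)-\psi\varphi(\|x\|)$ stays bounded below and radially unbounded. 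I would make this rigorous by splitting $\mathbb{R}^n$ into $\mathbb{X}$ (where $B$ is continuous on a set whose closure may be taken, or at least $B$ is bounded on bounded subsets) and its complement (where the explicit bound applies), and use $\alpha_1(\|x\|)\le V(x)\le\alpha_2(\|x\|)$ to compare.

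For property~(i), I need $W>0$ on $\mathbb{D}$. Since $\mathbb{D}\subset\mathbb{X}$ by assumption~(i), the relevant comparison is on $\mathbb{X}$. Here the natural route is: on $\mathbb{D}$ we have $B(x)>0$ from \eqref{eqn-4} and $V(x)\ge0$, so $W(x)=V(x)+\psi B(x)>0$ provided $\psi>0$ (which assumption~(ii) forces, since $\alpha_2(\|x\|)>0$ on $\partial\mathbb{X}\subset\mathbb{R}^n\setminus\{0\}$). So (i) is essentially immediate. Property~(iii), $\mathbb{S}_{\rclbf}=\{W\le0\}\neq\varnothing$, follows because $B$ has $\mathbb{S}_{\rcbf}=\{B\le0\}\neq\varnothing$ and in fact for $x\notin\mathbb{X}$ we have $B(x)\le-\varphi(\|x\|)\le0$; picking any such $x$ with $\|x\|$ large enough that $\psi\varphi(\|x\|)\ge V(x)$ (possible since $V(x)\le\alpha_2(\|x\|)$ and one can arrange $\alpha_2(r)\le\psi\varphi(r)$ eventually, or use the $\partial\mathbb{X}$ bound propagated outward) gives $W(x)\le0$. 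Then property~(ii) of Definition~\ref{def-6}, the Razumikhin-type infinitesimal decrease for $W$, is obtained by adding the defining inequality \eqref{eqn-2} for $V$ (with constants $\gamma_\rclf,\eta_\rclf,\mu_\rclf$) to $\psi$ times the defining inequality \eqref{eqn-5} for $B$ (with constants $\gamma_\rcbf,\eta_\rcbf,\mu_\rcbf$), valid on $\mathbb{R}^n\setminus\mathbb{D}$; by linearity $L_fW=L_fV+\psi L_fB$ and $L_gW=L_gV+\psi L_gB$, and one sets $\gamma_\rclbf=\min\{\gamma_\rclf,\gamma_\rcbf\}$, $\eta_\rclbf=\max\{\eta_\rclf,\eta_\rcbf\}$, $\mu_\rclbf=\max\{\mu_\rclf,\mu_\rcbf\}$ (or $\min$ — the direction that makes the supremum term an upper bound), using $W(x)\ge0$ on $\mathbb{R}^n\setminus\mathbb{D}$ — wait, this needs care — to convert $-\gamma_\rclf V-\psi\gamma_\rcbf B$ into $-\gamma_\rclbf W$; assumption~(iii) guarantees $\gamma_\rclbf>\eta_\rclbf$.

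The main obstacle I anticipate is property~(iv), $\overline{\mathbb{R}^n\setminus(\mathbb{D}\cup\mathbb{S}_\rclbf)}\cap\overline{\mathbb{D}}=\varnothing$, i.e.\ that $\mathbb{S}_\rclbf$ shields $\mathbb{D}$. The key point is the statement's conclusion that the forbidden set is not $\mathbb{D}$ itself but $\mathbf{D}=\{x\in\mathbb{X}:W(x)>0\}$, so really one must show $\overline{\mathbb{R}^n\setminus(\mathbf{D}\cup\mathbb{S}_\rclbf)}\cap\overline{\mathbf{D}}=\varnothing$ with $\mathbf{D}$ in place of $\mathbb{D}$ — i.e.\ re-interpret the CLBRF relative to the enlarged unsafe set $\mathbf{D}$, check $\mathbb{D}\subset\mathbf{D}$ (true since $W>0$ on $\mathbb{D}$ and $\mathbb{D}\subset\mathbb{X}$), and verify the geometry. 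I expect to argue that on $\partial\mathbb{X}$, assumption~(ii) gives $W(x)=V(x)+\psi B(x)\le\alpha_2(\|x\|)-\psi\varphi(\|x\|)<0$, so $\partial\mathbb{X}\subset\inte(\mathbb{S}_\rclbf)$; since $\mathbf{D}\subset\mathbb{X}$ and $\mathbf{D}$ is open in $\mathbb{X}$, its closure cannot touch $\mathbb{R}^n\setminus\mathbb{X}$ without crossing $\partial\mathbb{X}\subset\mathbb{S}_\rclbf$, which separates $\mathbf{D}$ from the exterior. The delicate part is handling the topology cleanly — showing any path from inside $\mathbf{D}$ to the region where decrease fails must pass through the strictly-negative sublevel set $\{W<0\}$ — and making sure the $\partial\mathbb{X}$ estimate is strict and uniform enough; I would phrase it via the sublevel-set containment $\overline{\mathbf{D}}\subset\mathbb{X}$ and $\partial\mathbf{D}\cap\partial\mathbb{X}=\varnothing$, concluding $\partial\mathbf{D}\subset\{W=0\}=\partial\mathbb{S}_\rclbf$ inside $\inte(\mathbb{X})$, which yields (iv).
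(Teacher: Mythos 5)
Your proposal follows essentially the same route as the paper: positivity of $W$ on $\mathbb{D}$ via $B>0$ and $V\ge 0$; non-emptiness of $\mathbb{S}_{\rclbf}$ via the $\partial\mathbb{X}$ estimate $W(x)\le\alpha_2(\|x\|)-\psi\varphi(\|x\|)<0$; the Razumikhin decrease condition by adding \eqref{eqn-2} and $\psi$ times \eqref{eqn-5} (with $\gamma_{\rclbf}=\min$, $\eta_{\rclbf}=\max$, $\mu_{\rclbf}=\min$, the latter, not $\max$); and the separation property (iv) from the strict negativity of $W$ on $\partial\mathbb{X}$, so that $\mathbf{D}\subseteq\inte(\mathbb{X})$ and $\partial\mathbf{D}\cap\partial\mathbb{X}=\varnothing$. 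Your instinct to work with the enlarged set $\mathbf{D}$ rather than $\mathbb{D}$ itself is exactly right and matches the paper.

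The place you wrote ``wait, this needs care'' is precisely where the real difficulty sits, and you should not paper over it: $W\ge 0$ on $\mathbb{R}^n\setminus\mathbb{D}$ is false (indeed $W<0$ on all of $\partial\mathbb{X}$), and with $\gamma_{\rclbf}=\min\{\gamma_{\rclf},\gamma_{\rcbf}\}$ the step $-\gamma_{\rclf}V-\psi\gamma_{\rcbf}B\le-\gamma_{\rclbf}W$ reduces to $(\gamma_{\rclbf}-\gamma_{\rclf})V\le\psi(\gamma_{\rcbf}-\gamma_{\rclbf})B$, which fails whenever $\gamma_{\rclf}<\gamma_{\rcbf}$ and $B<0$ (which happens everywhere outside $\mathbb{X}$). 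There is an additional, unflagged issue in the same line: $\inf_u$ does not distribute over sums, so $\inf_u\{L_fW+L_gWu\}$ cannot in general be bounded by the sum of the two separate infima — if $L_gW=0$ while $L_gV=-\psi L_gB\neq 0$, both \eqref{eqn-2} and \eqref{eqn-5} may be satisfied vacuously (infimum $=-\infty$) without constraining $L_fW$ at all. The paper's own proof asserts this chain of inequalities without addressing either point, so your proposal reproduces, rather than creates, the gap — but it \emph{is} a gap and would need extra structure (e.g.\ $\gamma_{\rclf}=\gamma_{\rcbf}$, or a common minimizing $u$, or restricting attention to the set where $B\ge 0$). Similarly, your treatment of properness is heuristic: hypothesis (ii) only controls $\alpha_2$ vs.\ $\psi\varphi$ on $\partial\mathbb{X}$, and there is no hypothesis preventing $\psi\varphi(\|x\|)$ from dominating $\alpha_1(\|x\|)$ as $\|x\|\to\infty$, in which case $W$ would fail to be radially unbounded; the paper does not verify this item of Definition~\ref{def-6} either.
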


The proof of Theorem \ref{thm-5} is presented in Appendix \ref{asubsec-pf5}. With the CLRF-II and CBRF, item (iii) can be verified easily. With item (i), the existence of $\psi\in\mathbb{R}^{+}$ in item (ii) can be verified. Item (i) can be satisfied via the construction and computation. Specifically, let $\mathbb{X}:=\mathbb{D}+\mathbf{B}(\varepsilon)$ with arbitrarily small $\varepsilon>0$, and for all $x\in\partial\mathbb{X}$, $B(x)$ can be computed and upper bounded via some function $-\varphi(\|x\|)$. Then, $B(x)$ can be set as $-\varphi(\|x\|)$ for all $x\notin\mathbb{X}$, which implies the satisfaction of item (i). Theorem \ref{thm-5} extends the results in \cite{Romdlony2016stabilization} into the time-delay case. Different from the exponentially stabilizing CLF and the constant property of the function $\varphi$ in \cite{Romdlony2016stabilization}, all conditions here are allowed to depend on the system state.

\section{Numerical Example}
\label{sec-examples}

Consider the following mechanical system from \cite{Romdlony2016stabilization}
\begin{align}
\label{eqn-6}
\dot{x}_{1}=x_{2}, \quad \dot{x}_{2}=-h(x_{t})-x_{1}+u,
\end{align}
where $x=(x_{1}, x_{2})\in\mathbb{R}^{2}$ with $x_{1}$ being the displacement and $x_{2}$ being the velocity, and $u\in\mathbb{R}$ is the control input. In \eqref{eqn-6}, $h(x_{t})=(0.8+2e^{-100|x_{2}(t-\tau)|})\tanh(10x_{2}(t-\tau))+x_{2}(t-\tau)$ is the delayed friction model to describe the damping parameter, where $\tau\in[0, 0.3]$ is the time delay. Therefore, the system \eqref{eqn-6} can be written as the form of \eqref{eqn-1} with  $f(x)=(x_{2}, -h(x_{t})-x_{1})$ and $g(x)=(0, 1)$.

To guarantee the stabilization objective, we define the Lyapunov candidate as $V(x):=x_{1}^{2}+x_{1} x_{2}+x_{2}^{2}$. By the detailed computation, we have that item (i) in Definition \ref{eqn-2} holds with $\alpha_{1}(v)=0.5v^{2}$ and $\alpha_{2}(v)=1.5v^{2}$ for all $v\geq0$. Hence, $V(x)$ is a CLRF-II if the condition \eqref{eqn-2} holds. For the safety objective, we assume the unsafe set to be $\mathbb{D}:=\{x\in\mathfrak{X}: \mathcal{H}(x)<4\}$ with $\mathcal{H}(x):=(1-(x_{1}+2)^{2})^{-1}+(1-(x_{2}-1)^{2})^{-1}$. Then, the corresponding barrier function is defined as
\begin{align}
B(x)=\left\{\begin{aligned}
&(e^{-\mathcal{H}(x)}-e^{-4})\|x\|^{2}, && \forall x\in\mathfrak{X},  \\
&-e^{-4}\|x\|^{2}, && \text{elsewhere},
\end{aligned}\right.
\end{align}
where $\mathfrak{X}:=(-3, -1)\times(0, 2)$. Obviously, $B(x)>0$ holds for $x\in\mathbb{D}$, and $\mathbb{S}_{\rcbf}:=\{x\in\mathbb{R}^{n}: B(x)\leq0\}\neq\varnothing$. Then $B(x)$ is a CBRF if the condition \eqref{eqn-5} holds.

Based on $V(x)$ and $B(x)$, we next construct the CLBRF $W(x):=V(x)+\psi B(x)$ to investigate the stabilization and safety objectives simultaneously. First, item (i) in Theorem \ref{thm-5} holds with $\varphi(v):=e^{-4}v^{2}$ for all $v\geq0$. From item (i) in Theorem \ref{thm-5}, $\psi>81.8972$. Item (iii) can be satisfied based on the construction of the CLRF $V(x)$ and CBRF $B(x)$. Therefore, $W(x)=V(x)+\psi B(x)$ is the CLBRF for the system \eqref{eqn-6}, and will be used to design the controller of the form \eqref{eqn-3} to guarantee the stabilization and safety objectives simultaneously. Given $\gamma_{\rclbf}=2.5, \eta_{\rclbf}=2$ and the gain $\lambda=2$, Fig. \ref{fig-1} shows the state trajectories of the closed-loop system starting from different initial conditions. From Fig. \ref{fig-1}, all trajectories converge to zero while avoiding the unsafe set $\mathbb{D}$, which thus verifies the efficiency of the proposed approach.

\begin{figure}
\centering
\begin{picture}(60, 95)
\put(-60, -8){\resizebox{60mm}{35mm}{\includegraphics[width=2.5in]{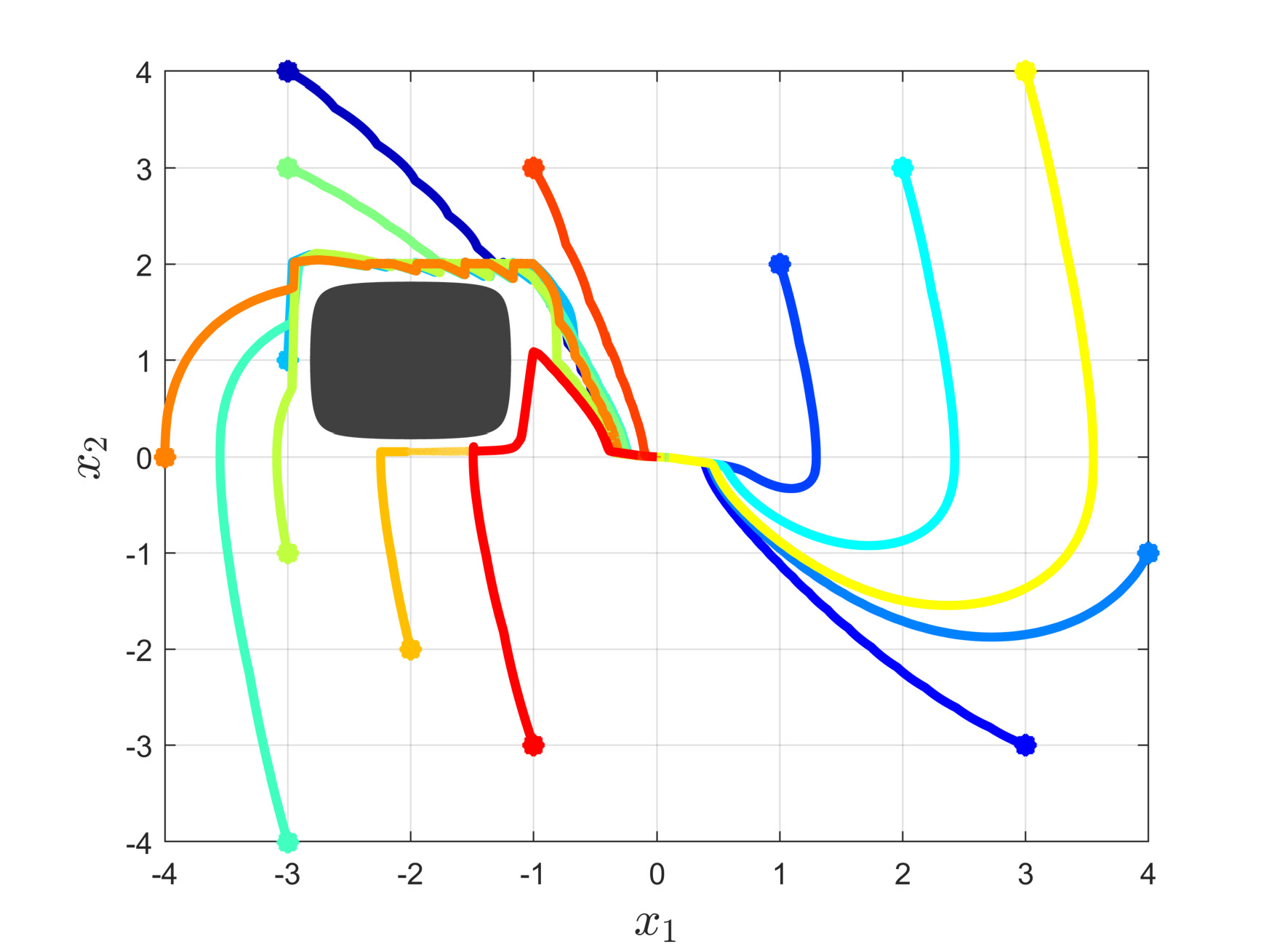}}}
\end{picture}
\caption{The simulation of the closed-loop system based on the proposed CLBRF. The dark grey region is the unsafe set $\mathbb{D}$, and the solid curves are the state trajectories starting from different constant initial conditions.}
\label{fig-1}
\end{figure}

\section{Conclusion}
\label{sec-conclusion}

This paper provided a novel framework for the control design of safety-critical systems with time delays. Based on the Razumikhin approach, the Razumikhin-type control Lyapunov and barrier functions were proposed to investigate the stabilization and safety control problems. To achieve the safety and stabilization objectives simultaneously, the proposed Razumikhin-type control functions were combined such that the stabilizing and safety controllers can be merged. Future work will be devoted to decentralized safety control for multi-agent systems with time delays.

\appendix
\setcounter{equation}{0}
\renewcommand{\theequation}{A.\arabic{equation}}

\subsection{Technical Lemma}

\begin{lemma}
\label{lem-A1}
Given a locally Lipschitz function $V: \mathbb{R}^{n}\rightarrow\mathbb{R}^{+}$, if there exist $\gamma>\eta>0$ such that for all $t\in\mathbb{R}^{+}$,
\begin{align*}
D^{+}V(x(t))\leq-\gamma V(x(t))+\eta\sup_{\theta\in[-\Delta, 0]} e^{\mu\theta}V(x(t+\theta)),
\end{align*}
where $D^{+}V(x(t))= \lim\sup_{s\rightarrow0^{+}}(V(t+s)-V(t))/s$ is the upper Dini derivative of $V(t)=V(x(t))$, then there exists $\varrho\in(0, \bar{\varrho})$ such that $V(x(t))\leq e^{-\varrho t}V(0)$ for all $t>0$, where $\bar{\varrho}$ is the solution to $2\varrho-\gamma+\eta e^{\Delta\varrho}=0$.
\end{lemma}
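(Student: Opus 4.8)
The plan is to run an elementary ``first escape time'' argument on the exponentially reweighted scalar $W(t):=e^{\varrho t}V(x(t))$, once the decay rate $\varrho$ has been pinned down. I would set $\Phi(\varrho):=2\varrho-\gamma+\eta e^{\Delta\varrho}$; since $\gamma>\eta>0$ we have $\Phi(0)=\eta-\gamma<0$, while $\Phi$ is continuous, strictly increasing on $\mathbb{R}^{+}$ and unbounded, so it has a unique positive root $\bar\varrho$ and $(0,\bar\varrho)\neq\varnothing$. Fixing any $\varrho\in(0,\bar\varrho)$ then yields $\Phi(\varrho)<0$, i.e. the \emph{strict} inequality $\varrho-\gamma+\eta e^{\varrho\Delta}=\Phi(\varrho)-\varrho<0$. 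This slack, produced by the extra $\varrho$ in the defining equation for $\bar\varrho$, is exactly what the rest of the argument consumes, and is the reason $\varrho$ must be taken strictly below $\bar\varrho$.

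Next I would record a delay differential inequality for $W$. Since $e^{\varrho t}$ is positive and $C^{1}$, the product rule for the upper Dini derivative together with the hypothesis give, for all $t\ge0$,
\begin{align*}
D^{+}W(t) &= \varrho W(t)+e^{\varrho t}D^{+}V(x(t)) \\
&\le (\varrho-\gamma)W(t)+\eta\sup_{\theta\in[-\Delta,0]}e^{(\mu-\varrho)\theta}W(t+\theta),
\end{align*}
where the inequality uses the assumed bound on $D^{+}V(x(t))$ and the identity $e^{\varrho t}V(x(t+\theta))=e^{-\varrho\theta}W(t+\theta)$. Because $\mu\ge0$ and $\theta\in[-\Delta,0]$ force $(\mu-\varrho)\theta\le-\varrho\theta\le\varrho\Delta$, the weight obeys $e^{(\mu-\varrho)\theta}\le e^{\varrho\Delta}$, so that $D^{+}W(t)\le(\varrho-\gamma)W(t)+\eta e^{\varrho\Delta}\sup_{\theta\in[-\Delta,0]}W(t+\theta)$ for all $t\ge0$.

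Finally I would close the estimate by contradiction. Write $\overbar{V}_{0}:=\sup_{\theta\in[-\Delta,0]}V(x(\theta))$ (the finite quantity standing for ``$V(0)$'' in the statement), and note $W(s)=e^{\varrho s}V(x(s))\le V(x(s))\le\overbar{V}_{0}$ for $s\in[-\Delta,0]$. Fix $\delta>0$, set $M_{\delta}:=\overbar{V}_{0}+\delta>0$ (so $W(0)<M_{\delta}$), and suppose $W(t_{1})>M_{\delta}$ for some $t_{1}>0$. Let $t^{*}:=\inf\{t\ge0:W(t)>M_{\delta}\}$; continuity of $W$ on $[0,\infty)$ (from continuity of $x(\cdot)$ and $V$) forces $t^{*}>0$, $W(t^{*})=M_{\delta}$, $W(s)\le M_{\delta}$ for all $s\in[-\Delta,t^{*}]$, and a sequence $t_{n}\downarrow t^{*}$ with $W(t_{n})>W(t^{*})$, whence $D^{+}W(t^{*})\ge0$. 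On the other hand, the inequality of the previous paragraph at $t=t^{*}$, together with $\sup_{\theta\in[-\Delta,0]}W(t^{*}+\theta)\le M_{\delta}=W(t^{*})$, gives $D^{+}W(t^{*})\le M_{\delta}\bigl(\varrho-\gamma+\eta e^{\varrho\Delta}\bigr)<0$, a contradiction. Hence $W(t)\le M_{\delta}$, i.e. $V(x(t))\le e^{-\varrho t}(\overbar{V}_{0}+\delta)$, for all $t\ge0$; letting $\delta\downarrow0$ yields $V(x(t))\le e^{-\varrho t}\overbar{V}_{0}$ for all $t\ge0$, which is the asserted bound. The one genuinely delicate point is obtaining the \emph{strict} sign $D^{+}W(t^{*})<0$ rather than merely $\le0$: it is precisely the strictness that collides with $D^{+}W(t^{*})\ge0$, and it is what the positivity $M_{\delta}>0$ and the strict slack secured in the choice of $\varrho$ are there to guarantee.
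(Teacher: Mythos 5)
Your proof is correct and follows the same route as the paper's: fix $\varrho$ strictly below the root of the transcendental equation and close a first-crossing-time contradiction against the Razumikhin-type delay differential inequality. The only differences are cosmetic --- you reweight to $W(t)=e^{\varrho t}V(x(t))$ and use a $\delta$-slack where the paper compares $V(t)$ directly against $\mathfrak{U}(t)=e^{-\varrho t}V(0)$, and your reading of ``$V(0)$'' as $\sup_{\theta\in[-\Delta,0]}V(x(\theta))$ is in fact what the paper's chain of inequalities silently requires once $t^{*}+\theta$ dips below $0$.
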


\begin{proof}
First, define a function $\Gamma(\varrho):=\varrho-\gamma+\eta e^{\Delta\varrho}$. Since $\gamma>\eta>0$, $\Gamma(0)=-\gamma+\eta<0$, and $\Gamma(\varrho)\rightarrow\infty$ as $\varrho\rightarrow\infty$. $\Gamma'(\varrho):=1+\eta\Delta e^{\Delta\varrho}>0$. Thus, there exists a unique $\bar{\varrho}>0$ such that $\Gamma(\bar{\varrho})=0$, and $\Gamma(\varrho)<0$ for all $\varrho\in(0, \bar{\varrho})$.

Next, we prove that for any $\varrho\in(0, \bar{\varrho})$,
\begin{align}
\label{eqn-A2}
V(t)&\leq e^{-\varrho t}V(0)=:\mathfrak{U}(t), \quad \forall t\geq0.
\end{align}
Obviously, \eqref{eqn-A2} holds at $t=0$. We claim that \eqref{eqn-A2} is valid for all $t\in\mathbb{R}^{+}$. If not, we assume that $t^{\ast}:=\inf\{t\geq0: V(t)>\mathfrak{U}(t)\}$ is the first time instant such that
\begin{align}
\label{eqn-A3}
V(t^{\ast})&=\mathfrak{U}(t^{\ast}),  \quad   V(t)\leq\mathfrak{U}(t), \quad \forall 0<t<t^{\ast},  \\
\label{eqn-A3.1}
V(t)&>\mathfrak{U}(t), \quad \forall t\in(t^{\ast}, t^{\ast}+\Delta t),
\end{align}
where $\Delta t>0$ is arbitrarily small. From \eqref{eqn-A3}-\eqref{eqn-A3.1},
\begin{align}
\label{eqn-A4}
D^{+}V(t^{\ast})&\geq D^{+}\mathfrak{U}(t^{\ast}).
\end{align}
However, from the definition of $\bar{\varrho}$ and \eqref{eqn-A3},
\begin{align}
\label{eqn-A4.1}
D^{+}\mathfrak{U}(t^{\ast})&>(-\gamma+\eta e^{\varrho\Delta})e^{-\varrho t^{\ast}}V(0)  \nonumber  \\
&\geq-\gamma V(t^{\ast})+\eta\sup_{-\Delta\leq\theta\leq0}e^{(\Delta+\theta)\varrho}V(t^{\ast}+s) \nonumber  \\
&\geq-\gamma V(t^{\ast})+\eta\sup_{-\Delta\leq\theta\leq0}e^{\varrho\Delta+(\varrho-\mu)\theta}e^{\mu\theta}V(t^{\ast}+\theta)  \nonumber  \\
&\geq D^{+}V(t^{\ast}),
\end{align}
where the first ``$\geq$'' holds from \eqref{eqn-A3}, and last ``$\geq$'' holds from the fact that $e^{\varrho\Delta+(\varrho-\mu)\theta}>1$. Since \eqref{eqn-A4.1} contradicts with \eqref{eqn-A4}, the claim holds for all $t>0$.
\end{proof}

%

\subsection{Proof of Theorem \ref{thm-2}}
\label{asubsec-pf2}

First, we show the stabilization of the system \eqref{eqn-1} under the designed controller. If $L_{g}V(\phi)\equiv0$, then $u\equiv0$, and from \eqref{eqn-2} in Definition \ref{def-3}, we have
\begin{align}
\label{eqn-A6}
&L_{f}V(\phi)+L_{g}V(\phi)u+\gamma_{\rclf}V(x)-\eta_{\rclf}\|e^{\mu_{\rclf}\theta}V(\phi)\|_{\dly}  \nonumber  \\
&=L_{f}V(\phi)+\gamma_{\rclf}V(x)-\eta_{\rclf}\|e^{\mu_{\rclf}\theta}V(\phi)\|_{\dly}<0.
\end{align}
For the case $L_{g}V(\phi)\neq0$, we define $\mathfrak{a}_{\rclf}(\phi):=L_{f}V(\phi)+\gamma_{\rclf}V(x)-\eta_{\rclf}\|e^{\mu_{\rclf}\theta}V(\phi)\|_{\dly}$ and $\mathfrak{b}_{\rclf}(\phi):=L_{g}V(\phi)$ to simplify the notation. From the designed control law, we have
\begin{align}
\label{eqn-A7}
&L_{f}V(\phi)+L_{g}V(\phi)u+\gamma_{\rclf}V(x)-\eta_{\rclf}\|e^{\mu_{\rclf}\theta}V(\phi)\|_{\dly}   \nonumber  \\
&=-\sqrt{\mathfrak{a}^{2}_{\rclf}(\phi)+\lambda\|\mathfrak{b}_{\rclf}(\phi)\|^{4}}\leq0.
\end{align}

From \eqref{eqn-A6}-\eqref{eqn-A7}, we have that for all $t\in\mathbb{R}^{+}$,
\begin{align*}
D^{+}V(x(t))\leq-\gamma_{\rclf}V(x(t))+\eta_{\rclf}\|e^{\mu_{\rclf}\theta}V(\phi)\|_{\dly},
\end{align*}
where the Dini derivative is the usual derivative due to $V\in\mathcal{C}(\mathbb{R}^{n}, \mathbb{R}^{+})$. From Lemma \ref{lem-A1}, there exists $\varrho_{\rclf}>0$ such that $V(x(t))\leq e^{-\varrho_{\rclf}t}V(0)$ for all $t>0$, combining which with $V(0)\leq\|V(\xi)\|_{\dly}$ and item (i) in Definition \ref{def-2}, we conclude that the stabilization objective is achieved for the system \eqref{eqn-1}.

Second, we show the continuity of the controller at the origin. From the R-SCP, for arbitrary $\varepsilon\in\mathbb{R}^{+}$, there exists $\delta\in\mathbb{R}^{+}$ such that for any nonzero $\phi\in\mathcal{C}([-\Delta, 0], \mathbf{B}(\delta))$, there exists $u\in\mathbf{B}(\varepsilon)$ such that $L_{f}V(\phi)+L_{g}V(\phi)u<-\gamma_{\rclf}V(x)+\eta_{\rclf}\|e^{\mu_{\rclf}\theta}V(\phi)\|_{\dly}$, which implies $L_{f}V(\phi)+\gamma_{\rclf}V(x)\leq\varepsilon\|L_{g}V(\phi)\|+\eta_{\rclf}\|e^{\mu_{\rclf}\theta}V(\phi)\|_{\dly}$. Since $V$ is continuously differentiable and $g$ in \eqref{eqn-1} is locally Lipschitz, there exists $\hat{\delta}\in\mathbb{R}^{+}$ with $\hat{\delta}\neq\delta$ such that $\|L_{g}V(\phi)\|\leq\varepsilon$ for any nonzero $\phi\in\mathcal{C}([-\Delta, 0], \mathbf{B}(\hat{\delta}))$. Let $\delta_{\min}:=\min\{\hat{\delta}, \delta\}$, and for any nonzero $\phi\in\mathcal{C}([-\Delta, 0], \mathbf{B}(\delta_{\min}))$,
\begin{align*}
L_{f}V(\phi)+\gamma_{\rclf}V(x)&\leq\varepsilon^{2}+\eta_{\rclf}\|e^{\mu_{\rclf}\theta}V(\phi)\|_{\dly} \\
&\leq\varepsilon^{2}+\eta_{\rclf}\alpha_{2}(\delta_{\min}):=\bar{\varepsilon}^{2},  \\
\|u\|\leq\frac{\bar{\varepsilon}^{2}+\sqrt{\bar{\varepsilon}^{4}+\lambda\varepsilon^{4}}}{\varepsilon}&\leq2\bar{\varepsilon}^{2}/\varepsilon+\sqrt{\lambda}\varepsilon:=\hat{\varepsilon}.
\end{align*}
Hence, there exists $\delta_{\min}\in\mathbb{R}^{+}$ such that for any nonzero $\phi\in\mathcal{C}([-\Delta, 0], \mathbf{B}(\delta_{\min}))$, there exists $u\in\mathbf{B}(\hat{\varepsilon})$ such that $L_{f}V(\phi)+L_{g}V(\phi)u<-\gamma_{\rclf}V(x)+\eta_{\rclf}\|e^{\mu_{\rclf}\theta}V(\phi)\|_{\dly}$. Hence, the controller is continuous at the origin since $\delta_{\min}, \hat{\varepsilon}\in\mathbb{R}^{+}$ can be chosen arbitrarily small.

\subsection{Proof of Theorem \ref{thm-3}}
\label{asubsec-pf3}

If $L_{g}B(\phi)\equiv0$, then $u\equiv0$, and we have
\begin{align}
\label{eqn-A8}
&L_{f}B(\phi)+L_{g}B(\phi)u+\gamma_{\rcbf}B(x)-\eta_{\rcbf}\|e^{\mu_{\rcbf}\theta}B(\phi)\|_{\dly}  \nonumber\\
&=L_{f}B(\phi)+\gamma_{\rcbf}B(x)-\eta_{\rcbf}\|e^{\mu_{\rcbf}\theta}B(\phi)\|_{\dly}<0.
\end{align}
If $L_{g}B(\phi)\neq0$, then from the designed controller,
\begin{align}
\label{eqn-A9}
&L_{f}B(\phi)+L_{g}B(\phi)u+\gamma_{\rcbf}B(x)-\eta_{\rcbf}\|e^{\mu_{\rcbf}\theta}B(\phi)\|_{\dly} \nonumber\\
&=-\sqrt{\bar{\mathfrak{a}}^{2}_{\rcbf}(\phi)+\lambda\|\bar{\mathfrak{b}}_{\rcbf}(\phi)\|^{4}}\leq0.
\end{align}
where $\bar{\mathfrak{a}}_{\rcbf}(\phi)=L_{f}B(\phi)+\gamma_{\rcbf}B(x)-\eta_{\rcbf}\|e^{\mu_{\rcbf}\theta}B(\phi)\|_{\dly}$, and $\bar{\mathfrak{b}}_{\rcbf}(\phi)=L_{g}B(\phi)$. Combining \eqref{eqn-A8} and \eqref{eqn-A9} yields that for any nonzero $\phi\in\mathcal{C}([-\Delta, 0], \mathbb{S}_{\rcbf})$,
\begin{align*}
&L_{f}B(\phi)+L_{g}B(\phi)u\leq-\gamma_{\rcbf}B(x)+\eta_{\rcbf}\|e^{\mu_{\rcbf}\theta}B(\phi)\|_{\dly}.
\end{align*}
From Lemma \ref{lem-A1}, $B$ decreases along the trajectory of the system \eqref{eqn-1} with the designed controller. For the initial condition $\xi\in\mathcal{C}([-\Delta, 0], \mathbb{S}_{\rcbf})$, $B(\xi(\theta))\leq0$ for all $\theta\in[-\Delta, 0]$, which implies that $B(x(t))\leq e^{-\varrho_{\rcbf}t}B(x(0))<0$ for all $t\in\mathbb{R}^{+}$. Hence, $\mathbb{S}_{\rcbf}$ is forward invariant and the safety objective is achieved for the system \eqref{eqn-1}.

\subsection{Proof of Theorem \ref{thm-4}}
\label{asubsec-pf4}

First, we consider the case $\xi\in\mathcal{C}([-\Delta, 0], \mathbb{S}_{\rclbf})$, we have that $W(\xi(\theta))\leq0$ for all $\theta\in[-\Delta, 0]$. Similar to the proof of Theorem \ref{thm-2}, we yield that $D^{+}W(x(t))\leq-\gamma_{\rclbf}W(x(t))+\eta_{\rclbf}\|e^{\mu_{\rclbf}\theta}W(\phi)\|_{\dly}$, and thus $W(x(t))\leq e^{-\varrho_{\rclbf}t}W(x(0))$ for all $t\in\mathbb{R}^{+}$. We conclude that $W(x(t))<0$ for all $t\in\mathbb{R}^{+}$, which implies that the set $\mathbb{S}_{\rclbf}$ is forward invariant and $x(t)\notin\mathbb{D}$ for all $t\in\mathbb{R}^{+}$. In addition, the compactness of the set $\mathbb{S}_{\rclbf}$ comes from the properness of the function $W(x(t))$, and further we have that $\lim_{t\rightarrow\infty}x(t)\notin\mathbb{D}$.

Next, consider the case $\xi\in\mathcal{C}([-\Delta, 0], \mathbb{R}^{n}\setminus(\mathbb{D}\cup\mathbb{S}_{\rclbf}))$. Obviously, $\xi(\theta)\notin\mathbb{D}$ for all $\theta\in[-\Delta, 0]$. In addition, $W(\xi(\theta))\geq0$, and $W$ is non-increasing along the system trajectory. The condition $\overline{\mathbb{R}^{n}\setminus(\mathbb{D}\cup\mathbb{S}_{\rclbf})}\cap\overbar{\mathbb{D}}=\varnothing$ implies that $\overline{\mathbb{R}^{n}\setminus(\mathbb{D}\cup\mathbb{S}_{\rclbf})}$ does not intersect with $\overbar{\mathbb{D}}$. Therefore, the trajectory starting from $\mathbb{R}^{n}\setminus(\mathbb{D}\cup\mathbb{S}_{\rclbf})$ will reach the boundary of $\mathbb{R}^{n}\setminus(\mathbb{D}\cup\mathbb{S}_{\rclbf})$ instead of enter into the set $\mathbb{D}$ directly. Note that $W(x)=0$ holds on the boundary of $\mathbb{R}^{n}\setminus(\mathbb{D}\cup\mathbb{S}_{\rclbf})$, combining which with $W(x(t))\leq e^{-\varrho_{\rclbf}t}W(\xi(0))$ yields that $W(x(t))\leq W(\xi(0))$ for all $t\in\mathbb{R}^{+}$, which implies that $x(t)$ will remain in $\mathbb{S}_{\rclbf}$ for all $t\in\mathbb{R}^{+}$.

Finally, we prove that the system \eqref{eqn-1} is stabilizable under the designed controller. Let $\xi\in\mathcal{C}([-\Delta, 0], \mathbb{R}^{n}\setminus\mathbb{D})$, and from the above analysis, $x(t)\notin\mathbb{D}$ for all $t\in\mathbb{R}^{+}$. Following the similar fashion of the proof of Theorem \ref{thm-2}, $L_{f}W(\phi)+L_{g}W(\phi)u\leq-\gamma_{\rclbf}W(x)
+\eta_{\rclbf}\|e^{\mu_{\rclbf}\theta}W(\phi)\|_{\dly}$ for any nonzero $x\in\mathbb{R}^{n}\setminus\mathbb{D}$, combining which with Lemma \ref{lem-A1} yields
\begin{align}
\label{eqn-A10}
W(x(t))\leq e^{-\varrho_{\rclbf}t}W(\xi(0)), \quad \forall t\in\mathbb{R}^{+}.
\end{align}
Since the function $W$ is proper and lower-bounded, \eqref{eqn-A10} shows that the trajectory is bounded and that the closure of the set $\{x(t): t\in[0, \infty)\}$ is compact. This implies that the $\omega$-limit set $\Omega(x):=\{\mathbf{x}\in\mathbb{R}^{n}: \exists\{t_{k}\in\mathbb{R}^{+}: k\in\mathbb{N}^{+}\} \text{ such that }\lim_{t_{k}\rightarrow\infty}x(t_{k})=\mathbf{x}\}$ is non-empty, compact, connected and invariant. From \eqref{eqn-A10}, the function $W(x(t))$ is monotonically decreasing with respect to $t\in\mathbb{R}^{+}$ and has its limit (which is denoted by $\mathbf{w}\in\mathbb{R}$) as $t\rightarrow\infty$. That is, $W(\mathbf{x})=\lim_{t_{k}\rightarrow\infty}W(x(t_{k}))=\mathbf{w}$. Hence, the set $\Omega(x)$ only contains one element, which further implies that $\Omega(W)$ only contains one element (i.e., $\mathbf{w}\in\mathbb{R}$). From \eqref{eqn-A10} and the safety guarantee, $\mathbb{D}\cap\Omega(\xi)=\varnothing$ and thus $W(x)$ converges to the origin, which implies that the system \eqref{eqn-1} is semi-GAS under the designed controller.

\subsection{Proof of Theorem \ref{thm-5}}
\label{asubsec-pf5}

Since $V$ and $B$ are continuously differentiable, the function $W$ is continuously differentiable. From the definition of $W$, we have for all $x\in\mathbb{D}$,
\begin{align*}
W(x)&=V(x)+\psi B(x)>V(x)\geq\alpha_{1}(\|x\|),
\end{align*}
where ``$>$'' holds because $B(x)>0$ for all $x\in\mathbb{D}$. From item (i) of Definition \ref{def-2}, $W(x)>0$ holds for all $x\in\mathbb{D}$. From the definition of $\mathbf{D}$, $W(x)>0$ for all $x\in\mathbf{D}$. For all $x\in\partial\mathbb{X}$, we have from item (i) of Definition \ref{def-2} and item (ii) of Theorem \ref{thm-5} that $W(x)\leq V(x)-\psi\varphi(\|x\|)\leq\alpha_{2}(\|x\|)-\psi\varphi(\|x\|)<0$, which implies the non-emptiness of the set $\mathbb{S}_{\rclbf}$.

Second, from the definition of the set $\mathbf{D}$, $\mathbb{D}\subset\mathbf{D}$,  $W(x)>0$ for all $x\in\mathbf{D}$, and $W(x)<0$ for all $x\in\partial\mathbb{X}$. In addition,  $\mathbf{D}\subseteq\inte(\mathbb{X})$, and $\partial\mathbb{X}\cap\partial\mathbf{D}=\varnothing$. From the definitions of the functions $V, B$, we have that for any nonzero $\phi\in\mathcal{C}([-\Delta, 0], \mathbb{R}^{n}\setminus\mathbf{D})$,
\begin{align*}
&\inf_{u\in\mathbb{U}}\{L_{f}W(\phi)+L_{g}W(\phi)u\} \nonumber\\
&\leq\inf_{u\in\mathbb{U}}\{L_{f}V(\phi)+\psi L_{f}B(\phi)+(L_{g}V(\phi)+\psi L_{g}B(\phi))u\}   \nonumber\\
&<-\lambda_{\rclbf}W(x)+\eta_{\rclbf}\|e^{\mu_{\rclbf}\theta}W(\phi)\|_{\dly},
\end{align*}
where $\lambda_{\rclbf}:=\min\{\lambda_{\rclf}, \lambda_{\rcbf}\}, \eta_{\rclbf}:=\max\{\eta_{\rclf}, \eta_{\rcbf}\}$ and $\mu_{\rclbf}:=\min\{\mu_{\rclf}, \mu_{\rcbf}\}$. Note that $\lambda_{\rclbf}>\eta_{\rclbf}$ holds from item (iii).

Finally, since $\partial\mathbb{X}\cap\partial\mathbf{D}=\varnothing$ and $\mathbf{D}\subset\mathbb{X}$, $\overbar{\mathbb{R}^{n}\setminus(\mathbf{D}\cup\mathbb{S})}\cap\overbar{\mathbf{D}}=\varnothing$, which implies the last item in Definition \ref{def-6}. After verifying all conditions in Definition \ref{def-6}, we conclude that the function $W$ is a CLBRF for the system \eqref{eqn-1} with the initial condition $\xi\in\mathcal{C}([-\Delta, 0], \mathbb{R}^{n}\setminus\mathbf{D})$.



\begin{thebibliography}{10}

\bibitem{Humayed2017cyber}
A.~Humayed, J.~Lin, F.~Li, and B.~Luo, ``Cyber-physical systems security-{A}
  survey,'' \emph{IEEE Internet Things J.}, vol.~4, no.~6, pp. 1802--1831,
  2017.

\bibitem{Sontag1989universal}
E.~D. Sontag, ``A `universal' construction of {A}rtstein's theorem on nonlinear
  stabilization,'' \emph{Syst. Control Lett.}, vol.~13, no.~2, pp. 117--123,
  1989.

\bibitem{Jankovic2001control}
M.~Jankovic, ``Control {L}yapunov-{R}azumikhin functions and robust
  stabilization of time delay systems,'' \emph{IEEE Trans. Autom. Control},
  vol.~46, no.~7, pp. 1048--1060, 2001.

\bibitem{Pepe2017control}
P.~Pepe, ``On control {L}yapunov-{R}azumikhin functions, nonconstant delays,
  nonsmooth feedbacks, and nonlinear sampled-data stabilization,'' \emph{IEEE
  Trans. Autom. Control}, vol.~62, no.~11, pp. 5604--5619, 2017.

\bibitem{Ames2016control}
A.~D. Ames, X.~Xu, J.~W. Grizzle, and P.~Tabuada, ``Control barrier function
  based quadratic programs for safety critical systems,'' \emph{IEEE Trans.
  Autom. Control}, vol.~62, no.~8, pp. 3861--3876, 2016.

\bibitem{Romdlony2016stabilization}
M.~Z. Romdlony and B.~Jayawardhana, ``Stabilization with guaranteed safety
  using control {L}yapunov-barrier function,'' \emph{Automatica}, vol.~66, pp.
  39--47, 2016.

\bibitem{Ogren2001control}
P.~Ogren, M.~Egerstedt, and X.~Hu, ``A control {L}yapunov function approach to
  multi-agent coordination,'' in \emph{Proc. IEEE Conf. Decis. Control}, IEEE, 2001, pp. 1150--1155.

\bibitem{Panagou2015distributed}
D.~Panagou, D.~M. Stipanovi{\'c}, and P.~G. Voulgaris, ``Distributed
  coordination control for multi-robot networks using {L}yapunov-like barrier
  functions,'' \emph{IEEE Trans. Autom. Control}, vol.~61, no.~3, pp. 617--632,
  2015.

\bibitem{Lindemann2018control}
L.~Lindemann and D.~V. Dimarogonas, ``Control barrier functions for signal
  temporal logic tasks,'' \emph{IEEE Control Syst. Lett.}, vol.~3, no.~1, pp.
  96--101, 2018.

\bibitem{Ngo2005integrator}
K.~B. Ngo, R.~Mahony, and Z.-P. Jiang, ``Integrator backstepping using barrier
  functions for systems with multiple state constraints,'' in \emph{Proc. IEEE
  Conf. Decis. Control}, IEEE, 2005, pp.
  8306--8312.

\bibitem{Jankovic2018robust}
M.~Jankovic, ``Robust control barrier functions for constrained stabilization
  of nonlinear systems,'' \emph{Automatica}, vol.~96, pp. 359--367, 2018.

\bibitem{Sipahi2011stability}
R.~Sipahi, S.-I. Niculescu, C.~T. Abdallah, W.~Michiels, and K.~Gu, ``Stability
  and stabilization of systems with time delay,'' \emph{IEEE Control Syst.
  Mag.}, vol.~31, no.~1, pp. 38--65, 2011.

\bibitem{Gao2019stability}
Q.~Gao and H.~R. Karimi, \emph{Stability, Control and Application of Time-Delay
  Systems}, Butterworth-Heinemann, 2019.

\bibitem{Fridman2014tutorial}
E.~Fridman, ``Tutorial on {L}yapunov-based methods for time-delay systems,''
  \emph{Eur. J. Control}, vol.~20, no.~6, pp. 271--283, 2014.

\bibitem{Ren2019krasovskii}
W.~Ren and J.~Xiong, ``Krasovskii and {R}azumikhin stability theorems for
  stochastic switched nonlinear time-delay systems,'' \emph{SIAM J. Control
  Optim.}, vol.~57, no.~2, pp. 1043--1067, 2019.

\bibitem{Teel1998connections}
A.~R. Teel, ``Connections between {R}azumikhin-type theorems and the {ISS}
  nonlinear small gain theorem,'' \emph{IEEE Trans. Autom. Control}, vol.~43,
  no.~7, pp. 960--964, 1998.

\bibitem{Ren2018vector}
W.~Ren and J.~Xiong, ``Vector-{L}yapunov-function-based input-to-state
  stability of stochastic impulsive switched time-delay systems,'' \emph{IEEE
  Trans. Autom. Control}, vol.~64, no.~2, pp. 654--669, 2018.

\bibitem{Prajna2005methods}
S.~Prajna and A.~Jadbabaie, ``Methods for safety verification of time-delay
  systems,'' in \emph{Proc. IEEE Conf. Decis. Control}.\hskip 1em plus 0.5em
  minus 0.4em\relax IEEE, 2005, pp. 4348--4353.

\bibitem{Jankovic2018control}
M.~Jankovic, ``Control barrier functions for constrained control of linear
  systems with input delay,'' in \emph{American Control Conference}.\hskip 1em
  plus 0.5em minus 0.4em\relax IEEE, 2018, pp. 3316--3321.

\bibitem{Pepe2014stabilization}
P.~Pepe, ``Stabilization in the sample-and-hold sense of nonlinear retarded
  systems,'' \emph{SIAM J. Control Optim.}, vol.~52, no.~5, pp. 3053--3077,
  2014.

\bibitem{Hale1993introduction}
J.~K. Hale and S.~M.~V. Lunel, \emph{Introduction to Functional Differential
  Equations}.\hskip 1em plus 0.5em minus 0.4em\relax Springer Science \&
  Business Media, 1993.

\bibitem{Sastry2013nonlinear}
S.~Sastry, \emph{Nonlinear Systems: Analysis, Stability, and Control}.\hskip
  1em plus 0.5em minus 0.4em\relax Springer Science \& Business Media, 2013,
  vol.~10.

\bibitem{Sepulchre2012constructive}
R.~Sepulchre, M.~Jankovic, and P.~V. Kokotovic, \emph{Constructive Nonlinear
  Control}.\hskip 1em plus 0.5em minus 0.4em\relax Springer Science \& Business
  Media, 2012.

\bibitem{Clarke2010discontinuous}
F.~Clarke, ``Discontinuous feedback and nonlinear systems,'' \emph{IFAC
  Proceedings}, vol.~43, no.~14, pp. 1--29, 2010.

\bibitem{Wieland2007constructive}
P.~Wieland and F.~Allg{\"o}wer, ``Constructive safety using control barrier
  functions,'' \emph{IFAC Proceedings}, vol.~40, no.~12, pp. 462--467, 2007.

\bibitem{Wu2019new}
D.~Wu, Y.~Bai, and C.~Yu, ``A new computational approach for optimal control
  problems with multiple time-delay,'' \emph{Automatica}, vol. 101, pp.
  388--395, 2019.
\end{thebibliography}
\end{document}